\newcommand{\TM}[2]{#1^{(#2)}}
\newtheorem*{theorem*}{Theorem}
\newtheorem*{T1}{Theorem~\ref{thm1}} 
\begin{document}
\title{Serial Monopoly on Blockchains with Quasi-patient Users}

\author{Paolo Penna\inst{1} \and Manvir Schneider\inst{2}}
\institute{IOG \and Cardano Foundation}
\maketitle              
\begin{abstract}
In the face of limited block size, miners (e.g., in Bitcoin) typically prioritize transactions with the highest bids, which increasingly make up a larger portion of their revenue. If the block size were to expand significantly, meeting all transaction demand due to infrastructure or protocol improvements, bids could drop to zero or to a constant minimum fee. This would diminish miners' incentives to mine, potentially affecting network security. To address this, Lavi et al.~\cite{Lavi2022} introduced a monopolistic pricing mechanism where miners may not fill the entire block but only include transactions that pay a minimum price set by the miner. This mechanism aims to be incentive-compatible and allows miners to collect some revenue, although it may result in an unbounded loss in welfare. Nisan~\cite{Nisan} expands this by modeling bidders as \textit{patient}, meaning they are willing to wait without cost until block prices drop low enough for their transactions to be included, leading to wildly fluctuating prices even when demand is stable and there is no stochastic element in the model.
In order to capture users' diminishing interest in having their transactions added to the ledger over time, we consider a more realistic setting with \textit{quasi-patient} users, where only a fraction $\delta \in [0,1]$ of pending transactions remains in the next round. This richer model encompasses both Lavi et al.'s \cite{Lavi2022} \emph{impatient} users ($\delta=0$) and Nisan's \cite{Nisan} \emph{patient} users ($\delta=1$) as special cases. We demonstrate that Nisan's fluctuating dynamics persist for $\delta$ close to 1, while for $\delta$ close to 0, the dynamics resemble the impatient case. For $\delta \in (0,1)$, we establish new bounds on price dynamics, revealing unexpected effects. Unlike the fully patient case, the bounds of the dynamics for $\delta<1$ depend on the demand curve and undergo a ``transition phase''. For some $\delta$, the model mirrors the fully patient setting, and for smaller $\delta' < \delta$, it stabilizes at the highest \textit{monopolist} price, thus collapsing to the impatient case. We provide quantitative bounds and analytical results, showing that the bounds for $\delta=1$ are generally not tight for $\delta<1$, and we give guarantees on the minimum (``admission'') price for transactions.

\keywords{Blockchain  \and Transaction Fee \and Monopolistic Pricing Mechanism.}
\end{abstract}

\section{Introduction}
Transaction fee mechanisms are a fundamental part of a blockchain. 
A block leader, in general, has full freedom to choose which transaction from the public mempool (and private mempool) to include in a block. A well designed transaction fee mechanism contributes to maximizing the social welfare, which is the total value of the chosen transactions subject to the block size constraint.\footnote{The definition social welfare may also include the value of the block producers, see Bahrani et al.~\cite{Bahrani2024}. However, how to derive or estimate these values, both for users and block producers is a difficult question.} 
The most prominent proof-of-work blockchain, Bitcoin~\cite{Nakamoto2008}, employs a pay-your-bid mechanism. In particular, the higher the bid attached to a transaction, the higher the chances to be included in the next block by the miner. The user-paid bids are rewarded to the miner. Myopic rational miners will always try to maximize their revenue and will therefore choose the highest paying transactions. Since block space is scarce, this mechanism can drive prices high, especially when there is congestion and hence transaction inclusion might experience a considerable delay. 
On the contrary, if block space becomes large enough to meet all demand, the fees/bids would drop to 0 and miners are left with no incentive to mine and thus will compromise security.

Opposing to the pay-your-bid mechanism is the dynamic posted price mechanism, like EIP-1559. In Ethereum's EIP-1559, the transaction fee is split into base fee and a tip. The block proposer only receives the tips while the base fee is burned. Rational block proposers will therefore only select transactions that pay at least the base fee.\footnote{It may be that a block proposer has some positive intrinsic value for some transaction that pays less than the base fee, and therefore includes this transaction by paying the remaining base fee himself. Note that this is related to active block proposers, see Bahrani et al.~\cite{Bahrani2024}.} The target of EIP-1559 is to have half-full blocks. If the previous block was filled less (more) than the target, the base fee is lowered (increased) accordingly. The dynamic base fee allows to handle low and high demand phases and satisfies various good properties of transaction fee mechanisms such as incentive-compatibility properties and off-chain collusion proofness \cite{Roughgarden2021}.\footnote{Note that also Bitcoin's fee mechanism and Ethereum's pre-EIP-1559 fee mechanism allow to handle low and high demand phases but do not satisfy various good properties of transaction fee mechanisms.}  

A different approach to transaction fee is that of Cardano, where there is a constant minimum fee and a fixed fee per byte. The transaction selection process is a first-in-first-out (FIFO) mechanism. The transaction fees of included transactions are collected in a `pot' and distributed at the end of an epoch (5 days) \cite{cardano-monetary}.
The rewards (fees) are distributed to the block producers (stake pools) proportionally to the number of proposed blocks in the epoch.\footnote{For the Cardano blockchain, each stake pool, given their stake, has an expected number of blocks that it will produce during an epoch. The rewards that a stake pools receives at the end of an epoch are scaled by a performance variable which is the fraction of the actual number of produced blocks and the expected number.} 

In the event that the (minimum) fees are not high enough to incentivice block producers (to produce blocks) and thus security is at risk, a question one could ask is whether block leaders should be allowed to set their own fees rather than a fee imposed by the protocol, and if so, whether the fee should be the same for all and how it would affect blockchain security. The study of such a monopolistic pricing mechanism is part of Nisan~\cite{Nisan} and this paper. In the monopolist pricing mechanism, transactions willing to pay at least the price set by the monopolist are included in a block (until the block is full). Unlike in the pay-your-bid mechanism, all included transactions pay exactly the price set by the monopolist (rather than their bid). Or, in the words of Lavi et al.~\cite{Lavi2022}, the monopolist chooses the number of accepted transactions in the block and all transactions pay the smallest bid among the accepted transactions.

In his paper~\cite{Nisan}, Nisan assumes that block leaders set their own prices and all transaction that are not included in a block remain in the mempool forever until they are picked up eventually in a future block. On the one hand, this better captures the role and importance of the mempool in  blockchain price dynamics, revealing the following surprising effect:
\begin{quote}
    \emph{``[...] prices keep fluctuating wildly and this is an endogenous property of
the model and happens even when demand is stable with nothing stochastic in
the model.'' (\cite{Nisan})}
\end{quote}
On the other hand, the assumption above, is too strong and does not fully reflect real world behavior. In fact, partially impatient users may cancel their transaction after some time if not included in a block. The Cardano blockchain supports expiring transactions, allowing users to set an expiry time (or validity interval) for their transactions. If a transaction is not recorded in the ledger before its expiry time, it is discarded. This feature incurs no additional cost for users.\footnote{
On other blockchains, users cannot always cancel transactions without incurring a fee, as this could lead to denial-of-service attack vulnerabilities. Blockchains like Bitcoin and Ethereum mitigate this by allowing miners and validators to set a minimum fee increase percentage for users who wish to replace their transactions retroactively. For instance, some Bitcoin wallets support the Replace-by-Fee protocol, enabling users to replace an existing transaction with one that pays a higher fee.} Therefore, we  assume that only a fraction of unsupplied transactions remains in the next round, while the other fraction of unsupplied demand is removed (from the mempool).

\begin{figure}[t]
    \centering
    \includegraphics[width=\linewidth]{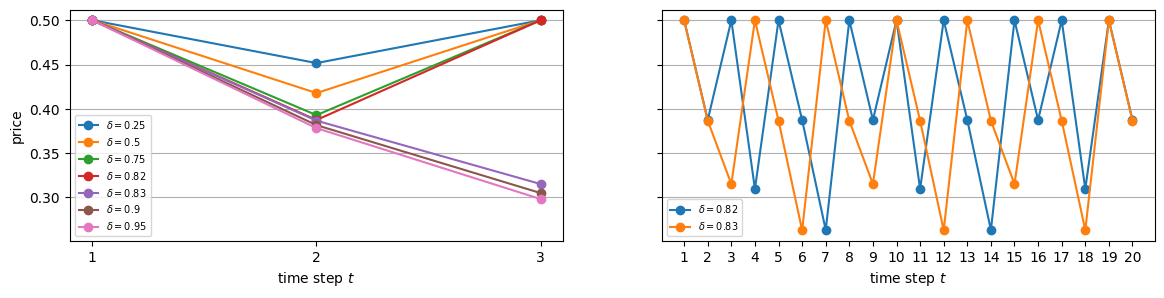}
    \caption{Price dynamics from Example~\ref{intro:example} for different values of $\delta$, with 3 time steps (left) and 20 time steps (right), respectively. For $\delta$ smaller than $2\sqrt{2}-2\approx 0.828$ the prices jumps up after step $t=2$. For $\delta$ above the threshold, the price decreases after step $t=2$.}
    \label{fig:example1}
\end{figure}

\subsection{Our Contributions}
We put forward a model for monopolist pricing dynamics tailored to accommodate \emph{quasi-patient} users (see Section~\ref{sec:model} for details and formal definitions). Our model incorporates  a ``decay'' parameter $\delta\in [0,1]$ which corresponds to the fraction of pending transactions that remain in the mempool at the next round. Thus, a fraction $1-\delta$ of pending transactions gets withdrawn by the users at each round. This can be motivated by users' diminishing interest in having their transactions included in a block, due to factors such as waiting costs. The case of \emph{impatient users} in Lavi et al.~\cite{Lavi2022} corresponds to $\delta=0$, and the case of \emph{patient users} in Nisan~\cite{Nisan} to $\delta=1$. Our model spans all intermediate scenarios between these two extreme cases, and it allows us to study how monopolist pricing mechanisms behave at different \emph{patience} levels $\delta$. In particular, we no longer assume \emph{patient users} who are willing to wait (and pay) indefinitely long for their transactions to be included on the ledger. 
A simulation of few steps of the resulting dynamics is shown in Figure~\ref{fig:example1} for several values of  $\delta$ and a simple demand function. At each time step, only transactions bidding at least the price in the graph are included, and thus the following questions arise:
\begin{quote}
    \emph{Given some daily demand function, what is the (minimum) price that users must pay to have their transactions  eventually included? What is the price that users have to pay to have their transactions included immediately? Is it even possible to analytically compute these prices  and how does $\delta$ affect them for arbitrary demand functions?}
\end{quote}
In Figure~\ref{fig:example1} (left), we observe two distinct behaviors in the price dynamics. For some values of $\delta$, the prices jump up after step 2, while for others, the prices continue to decrease. Another interesting question arises: will the dynamics for smaller $\delta$ eventually reach the same prices as those for larger $\delta$? For example, does the blue graph in Figure~\ref{fig:example1} (left) eventually reach the low prices observed in the purple graph?

We provide analytical results on the monopolistic pricing dynamics for any $\delta\in(0,1)$.  Our findings highlight that monopolistic price mechanisms for \emph{quasi-patient} users still posses good features, though with some key differences with the case of \emph{patient} users. 
In Sections~\ref{sec:delta-large} and~\ref{sec:lower-bounds}, we analyze the dynamics for different values of $\delta$ and how it affects their behavior. In particular, we demonstrate that in regimes with a \emph{sufficiently small fraction of expiring transactions} ($\delta<1$ sufficiently large), the dynamics behaves qualitatively similarly to the case of \emph{no expiring transactions} ($\delta=1$). The analysis highlights several differences, particularly how the ``structure'' of the demand function  $Q$ influences the dynamics for $\delta < 1$ compared to the case $\delta=1$.

Specifically, Theorem~\ref{thm1} informally states that:
\begin{itemize}
\item \emph{Prices decrease or jump up to maximum price.} The price for being included in the current block is either smaller than the one at the previous block, or it is the maximum price, which is the so called monopolist price (the price that is always asked if users are \emph{fully-impatient} and there is no pent-up demand \cite{Lavi2022}). 
\item \emph{Immediate inclusion price (monopolist price).} The largest price that the monopolist ever asks (immediate transaction inclusion guaranteed) equals the monopolist price.  Also for quasi-patient users the dynamics ask the monopolistic (maximum) price infinitely often, meaning that immediate inclusion is \emph{not} guaranteed for any lower price. 
     
    \item  \emph{Minimum admission price.} We call the lowest price the monopolist ever sets the \emph{minimum admission price}, and it is bounded based on $\delta$. If there is sufficient pent-up demand, the minimum admission price becomes low, allowing transactions to eventually be included at a cheaper price.
\end{itemize}

A direct comparison between our bounds for quasi-patient users and the case of patient users, shows the following. First,  our upper bounds on the minimum admission price depend on $\delta$. Second, the minimum admission price for $\delta<1$ is at least the minimum admission price for $\delta=1$ (cf. our Theorem~\ref{thm1} and Theorem~\ref{th:Nisan} below from \cite{Nisan}). In Section~\ref{sec:lower-bounds}, we prove lower bounds on the minimum admission price. In particular, Theorem~\ref{prop:Q-eps-fun} states the following:

\begin{itemize}
    \item \emph{The minimum admission price for impatient users is never tight for quasi-patient users}. That is, for every $\delta<1$ there is a demand function for  which the minimum admission price is strictly higher.  
    \item \emph{Collapse to the impatient case.} The change in the dynamics is not continuous in $\delta$. For some small $\delta>0$, the dynamics behave exactly like the case of \emph{impatient users} ($\delta=0$). That is, the minimum admission price coincides with the (maximum) monopolist price at all time steps.  
\end{itemize}
Furthermore, the above mentioned collapse means that the positive effect of pent-up demand, which results in a minimum admission price smaller than the monopolist price, may completely be nullified for quasi-patient users.
Another important difference is that, for patient users, the minimum admission price is ``almost'' independent of the structure of the demand function (it only depends on the block size $s$  and on the revenue at the monopolist price). This is no longer true for quasi-patient users, where the ``overall structure'' of the demand function seems to play a role.  
Without further assumptions on the demand function, the  conditions under which  Theorem~\ref{thm1} can be applied ($\delta>\bar{\delta}_{ser}$) are essentially tight as the collapse already happens slightly below $\bar{\delta}_{ser}$ -- see Theorem~\ref{thm:tightness-main}.

\subsection{Related Work}
Transaction fee mechanisms are analyzed from the perspective of mechanism design in Roughgarden~\cite{Roughgarden2021}. Additionally the dynamic posted price mechanism EIP-1559 is analyzed. Follow up work on transaction fee mechanism design includes \cite{Ferreira2021,Chung2023,DynamicTFMD,Chen2024,Bahrani2024}. More work focusing on the dynamics of EIP-1559 (and potentially chaotic behavior) is \cite{roughgarden2020transaction,leonardos2021,reijsbergen2021,leonardos2023}. 

\paragraph*{Monopolistic pricing mechanisms.} The monopolistic pricing mechanism was initially examined by Goldberg et al.~\cite{Goldberg2006}. Subsequently it has been analyzed within the context of blockchains by Lavi et al. \cite{Lavi2022}, Yao~\cite{Yao2018} and Basu et al.~\cite{Basu2019} prior to Nisan~\cite{Nisan} and this paper. In particular, it is motivated by its ability to maintain fee revenue when block rewards are low and blocks can be large in size.
Lavi et al.~\cite{Lavi2022} study the monopolistic pricing mechanism and describe the mechanism as follows: (1) Transactions specify bids (maximal fee) they are willing to pay; (2) Miners (or block leaders/monopolists) choose which subset to include in their block; (3) All transactions in the block pay the exact same fee which is equal to the smallest bid among the included transactions; (4) Miners maximize their revenue which is the product of the minimal bid and the number of included transactions. The focus of their paper is on a single shot game where users are maximally impatient in the sense that they derive utility from immediate block inclusion and no utility for inclusion in a future block. They show that truthful bidding (users bidding their true valuation) is ``nearly'' an equilibrium, i.e. relative gains from strategic bidding go to zero as number of transactions increase. The monopolistic pricing mechanism collects at least as much revenue from maximally impatient users as the pay-your-bid mechanism (as employed in Bitcoin).
Yao~\cite{Yao2018} builds on the work of \cite{Lavi2022} and studies properties of the monopolistic pricing mechanism, in particular, incentive compatibility when users' valuations are drawn from an i.i.d. distribution. Basu et al.~\cite{Basu2019} study a setting similar to \cite{Lavi2022}, where the miner has to fill the block up to a certain level to receive the reward, which consists of mean of the fees from the last $B$ blocks, including the current constructed block (note some similarity to Cardano \cite{cardano-monetary} and to Eyal et al.~\cite{Eyal16}). The model of Eyal et al.~\cite{Eyal16} assumes many miners, incentivizing them to pick the highest-value transactions, and aims to maximize social welfare. Note that the model of \cite{Lavi2022} does not aim to maximize social welfare. To see this, note that, if the monopolist chooses a subset of transactions that does not fill the block entirely, the monopolist could potentially include transactions with lower bids. However, doing so would decrease the price that all included transactions have to pay and hence would lower the monopolist's revenue.

Nisan~\cite{Nisan} studies a monopolist pricing mechanism, in which each block leader (or proposer) is allowed to choose the price $p$ for his block. Transactions willing to pay at least $p$ may be included by the monopolist and all included transactions pay exactly $p$.\footnote{In principle, this mechanism is the same as in Lavi et al.~\cite{Lavi2022}. In \cite{Lavi2022} the fee to be paid by users is determined by the lowest bid $p$ of the included transactions. There is at least one transaction with bid $p$ (which is the lowest bidding transaction), while in Nisan~\cite{Nisan} there need not be a transaction with bid exactly equal to $p$.} Rationality of block leaders implies that the block leaders will choose a price that maximizes their revenue given price and the block space filled by the chosen transactions.\footnote{While the other mechanism of Bitcoin and Ethereum maximize the total value of included transactions subject to the available block space (i.e. social welfare), the monopolist mechanism maximizes the block leaders revenue.} Nisan's model involves infinitely patient users, i.e. users' valuations of transaction inclusion do not depreciate over time. Transactions stay in the mempool until eventually picked up by some block leader.
Block leaders face the same demand distribution at every step in time plus the pent-up demand from the previous steps, that is, additionally to the daily demand the block leaders faces the transactions that were not picked up by previous block leaders. When optimizing given the current total demand, the block leader only optimizes for the current block (myopic block leader). Furthermore, the available block space for each block is fixed and demand is known to the block leader. 

Kiayias et al.~\cite{Tiered} study a mechanism to account for transactions with different priority/urgency. In particular, the mechanism splits blocks into different tiers with each tier having its own characteristics such as fee and size. The fee and size are dynamically adjusted based on previous demand and fees. This mechanism ensures that high priority transactions can choose to be included in a tier with high priority by paying high transaction fee.

Patient blockchain users are analyzed in Huberman et al.~\cite{Huberman21}, Gafni~and~Yaish~\cite{Gafni2022}.
The study of non-myopic miners includes expiring transactions \cite{Gafni2024}, base-fee manipulations \cite{Azouvi2023}, under-cutting attacks \cite{Undercutting1}.

\section{Model}\label{sec:model}
We extend the model of Nisan~\cite{Nisan} for non-strategic agents with an additional parameter $\delta\in [0,1]$ which corresponds to the fraction of pending transactions remaining  at next round (thus, $1-\delta$ is the fraction of pending transactions withdrawn\footnote{Or in other words, a fraction $1-\delta$ of pending transactions expires at each time step (cf. Cardano blockchain for expiring transactions)} by the corresponding users -- see below).
 The dynamics is specified as follows: 
\begin{itemize}
    \item \emph{Time} is discrete and indexed by $t = 1,2,\ldots,$.
    \item \emph{Daily demand}: A demand function $Q$ quantifies the daily demand $Q(p)$ for every price level $p$. Function $Q$ is continuous and decreasing in $p$ as $Q(p)$ is the number of newly added transactions willing to pay $p$ or more to be included.
    \item \emph{Monopolist}: A monopolist (chosen for the current round $t$) faces a total demand $D_t$ consisting of daily demand and pent-up demand from previous rounds (see below). As $D_t(p)$ is the total number of transactions willing to pay at least $p$, the monopolist chooses a price maximizing his own revenue subject to the supply constraint $s$ (block size = max number of transactions per block): 
    \begin{align}
        p_t = \arg \max_p p\cdot \min(s, D_t(p)) \ . 
    \end{align}
    The corresponding supplied quantity is $q_t = D_t(p_t)$, and the monopolist's revenue (at time $t$) is 
    $\texttt{REV}_t := p_t \cdot q_t$.
    \item \emph{Pent-up demand}: Initially there is no pent-up demand from previous rounds, that is,  $Z_0(p)=0$ for all $p$. The pent-up demand at time $t\geq 1$ is 
\begin{align}\label{eq:Z-def}
Z_t(p) :=
    \begin{cases}
        D_t(p)- q_t & \text{for } p\leq p_t \\
        0 & \text{for } p> p_t
    \end{cases} \ . 
\end{align}
    \item \emph{Total demand and $\delta$}: Only a fraction $\delta\in [0,1]$ of pent-up demand survives to the next round, and thus total demand is
\begin{align}\label{eq:D-def}
    D_t(p)= \delta \cdot Z_{t-1}(p) + Q(p)\ . 
\end{align}  
\end{itemize}

\begin{remark}
    For $\delta=1$ the model above boils down to the one in \cite{Nisan} where all transactions not included in the current round remain in the system and they are eventually included if an only if their price is above some minimum price $p_{ser}$. For $\delta=0$  transactions are either immediately included or they disappear, thus implying that the dynamics above stay at the monopolist price $p_{mon}>p_{ser}$ and only transactions willing to pay this price are included. 
\end{remark}

In the sequel we shall focus on the case $\delta\in (0,1)$ as the case $\delta=0$ is trivial and $\delta=1$ coincides with the model in \cite{Nisan}.

\paragraph*{Key quantities.}
Note that by the definition of the total demand and pent-up demand we can write the total demand as follows.
\begin{remark}
The total demand at time $t$ can be rewritten as 
\begin{align}\label{eq:D-rewritten}
    D_t(p) = 
    \begin{cases}
        a_t \cdot Q(p) - b_t  &  \text{for } p \leq \displaystyle \min_{1\leq \tau \leq t-1}p_\tau  \\ 
        Q(p) & \text{otherwise}
    \end{cases}
\end{align}
    where from \eqref{eq:Z-def} and \eqref{eq:D-def} we have
\begin{align}\label{eq:at-bt-quantities}
    a_t =  1 + \delta+\cdots + \delta^{t-1} \ , && b_t =  q_1 \delta^{t-1} + q_2\delta^{t-2}+\cdots + q_{t-1}\delta \ .
\end{align}
Note that $a_t = \frac{1-\delta^t}{1-\delta} $ for $\delta\in (0,1)$, and $a_t = t$ for $\delta = 1$. 

\end{remark}

As one of our main results (see Section~\ref{sec:delta-large}) show,  the price dynamics fluctuate between two prices that involve the following quantities:
\begin{definition}
For any demand function $Q$ and any supply $s$, the corresponding \emph{monopolist price} $p_{mon}$ and \emph{serial price} $p_{ser}$ are defined as follows: \begin{align}
    p_{mon}:= \arg \max_{p} p \cdot \min(s,Q(p)) \  ,  && q_{mon} := Q(p_{mon}) \  , 
    \\
    p_{ser}:= p_{mon}\cdot q_{mon}/s \  , && q_{ser} := Q(p_{ser}) \  . 
\end{align}
\end{definition}

Note that the \textit{monopolist price} $p_{mon}$ is simply the price that maximizes the revenue of the monopolist when facing demand $Q(p)$. 

\begin{remark}
    Since $D_t(p)\geq Q(p)$ at any time $t\geq 1$, the monopolist can always obtain the revenue at the monopolist price $\texttt{REV}_{mon} := p_{mon} \cdot q_{mon}$ by choosing price $p_{mon}$. Therefore, we have  $\texttt{REV}_t = p_t \cdot q_t \geq \texttt{REV}_{mon}$  for all $t$. 
\end{remark}

It turns out that these prices characterize tightly the dynamics for the case of \emph{patient} users ($\delta=1)$, as shown  by the next definition and theorem. 

\begin{definition} [Eventual Transaction Inclusion, (Minimum) Admission Price] \label{def:eventualinclusion,admission,minAdmission}
For a given price dynamic we consider the following definitions:
\begin{itemize}
    \item A transaction with price $p$ is \textit{eventually included} if there exists $\Delta_p$ such that, for every $T\geq 1$, there exists some $t$ with $p_t\leq p$ and $T \leq t \leq T+\Delta_p$. 
    \item  A price $p$ is called \textit{admission price} if all transactions paying $p$ are eventually included.
    \item The minimum admission price $p_{map}$ is the smallest admission price such that all transactions paying at least $p_{map}$ are eventually included. 
\end{itemize}
\end{definition}

Next, we state the main result from Nisan~\cite{Nisan}.
\begin{theorem}[Theorem~1 in \cite{Nisan} restated]\label{th:Nisan}
 For patient users ($\delta=1$) and for any strictly decreasing demand function $Q$ and supply $s$ the following holds:
    \begin{enumerate}
        \item The dynamics stay always between $p_{ser}$ and $p_{mon}$, that is,  prices $p_t$ satisfy $p_{ser}\leq p_t \leq p_{mon}$ for all $t\geq 1$.  In particular, transactions paying less than $p_{ser}$ will never be included. 
        At each step $t$, the prices either decrease ($p_t < p_{t-1}$) or they jump up to the monopolist price ($p_t=p_{mon}$). 
        
        \item Every price  larger than $p_{ser}$ is an admission price. Therefore, $p_{ser}$ is the minimum admission price ($p_{map}=p_{ser}$).
        Moreover, the dynamics  pass through the monopolist price $p_{mon}$ infinitely often. 
    \end{enumerate}
    Transactions paying at least $p_{mon}$ are immediately included, and this is tight as there are infinitely steps for which paying less will delay admission to a later step. 
\end{theorem}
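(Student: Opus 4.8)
The plan is to reduce everything to two structural facts about the total demand and then run an induction. The first fact is that pent-up demand vanishes at and above the previous price: since $q_{t-1}=D_{t-1}(p_{t-1})$, definition~\eqref{eq:Z-def} gives $Z_{t-1}(p_{t-1})=0$, hence for $\delta=1$ we have $D_t(p)=Q(p)$ for every $p\ge p_{t-1}$; more generally, by \eqref{eq:D-rewritten}, $D_t(p)=Q(p)$ for all $p>\min_{\tau\le t-1}p_\tau$. The second fact is the revenue floor $\texttt{REV}_t\ge\texttt{REV}_{mon}$ together with $q_t\le s$: because $D_t=Z_{t-1}+Q$ is continuous and strictly decreasing, a choice with $D_t(p_t)>s$ would let the monopolist raise the price while keeping the block full and strictly increase revenue, so $q_t=D_t(p_t)\le s$. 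I would prove the bounds of part~1 by induction on $t$, with base case $p_1=p_{mon}$ (as $D_1=Q$), under the hypothesis $p_{ser}\le p_\tau\le p_{mon}$ for $\tau<t$.

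For \emph{part 1}, the upper bound follows because any $p>p_{mon}$ satisfies $p>\min_{\tau\le t-1}p_\tau$ by the inductive hypothesis, so $D_t(p)=Q(p)$ and $p\cdot\min(s,D_t(p))=p\cdot\min(s,Q(p))\le\texttt{REV}_{mon}$, while $p_{mon}$ attains at least $\texttt{REV}_{mon}$ under $D_t$; hence the maximiser may be taken $\le p_{mon}$, i.e.\ $p_t\le p_{mon}$. The lower bound is immediate from $q_t\le s$ and the revenue floor: $p_t=\texttt{REV}_t/q_t\ge\texttt{REV}_{mon}/s=p_{ser}$. For the dichotomy, suppose $p_t\ge p_{t-1}$; then $D_t(p_t)=Q(p_t)$ by the first structural fact, so $\texttt{REV}_t=p_t\min(s,Q(p_t))\le\texttt{REV}_{mon}$, and combined with $\texttt{REV}_t\ge\texttt{REV}_{mon}$ this forces $p_t$ to maximise $p\mapsto p\min(s,Q(p))$, i.e.\ $p_t=p_{mon}$ (by uniqueness of the monopolist price, or taking $p_{mon}$ to be the largest maximiser). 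Thus at each step either $p_t<p_{t-1}$ or $p_t=p_{mon}$, and since $p_t\ge p_{ser}$ always, a transaction bidding $p<p_{ser}$ is never admitted.

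For \emph{part 2}, I would first show the dynamics return to $p_{mon}$ infinitely often. If not, the dichotomy makes the prices strictly decreasing from some point on, say $p_t\downarrow\ell\ge p_{ser}$; restricting to the record-low portion of the descent, \eqref{eq:D-rewritten}--\eqref{eq:at-bt-quantities} give $\sum_{\tau\le t}q_\tau=t\,Q(p_t)$, whence the telescoped identity $q_t=t\,Q(p_t)-(t-1)Q(p_{t-1})$. In the case $\ell=p_{ser}$ the revenue floor forces $q_t\to s$ directly; in the case $\ell>p_{ser}$, a deviation to a fixed $p'\in(p_{ser},\ell)$ has exploding demand $D_t(p')\to\infty$, so $R_t(p')\to p's$, and optimality pins $\texttt{REV}_t\to\ell s$ and again $q_t\to s$. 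Feeding $q_t\to s$ into the telescoped identity and applying a Ces\`aro/Stolz argument yields $Q(p_t)\to s$, contradicting $Q(p_t)\to Q(\ell)\le q_{ser}<s$ (the degenerate case $q_{ser}=s$ coincides with $p_{ser}=p_{mon}$ and is trivial). Hence $p_t=p_{mon}$ infinitely often, and since $p_t\le p_{mon}$ for all $t$, no price below $p_{ser}$ is admitted.

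To show every $p>p_{ser}$ is an \emph{admission price}, I would argue by accumulation: if, on arbitrarily long windows, all prices stayed above some $p>p_{ser}$, then nothing at values just below $p$ is ever served, so the pending demand there grows linearly in the window length; eventually a price slightly below $p$ fills the block and earns revenue close to $p\,s>p_{ser}\,s=\texttt{REV}_{mon}$, exceeding the revenue available at any price above $p$ (where demand stays bounded), forcing the price below $p$ within a window length $\Delta_p$ controlled by $Q$ and $s$. This is the main obstacle: upgrading the accumulation into a \emph{uniform} window bound $\Delta_p$ (rather than a mere ``infinitely often''), together with establishing the persistent oscillation between $p_{ser}$ and $p_{mon}$, is the quantitatively delicate heart of the argument and the part where the running-minimum bookkeeping must be handled with care. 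Granting it, $p_{ser}$ is the minimum admission price, and the closing statement follows at once: $p_t\le p_{mon}$ always means a bid of at least $p_{mon}$ is included immediately, while $p_t=p_{mon}$ infinitely often means any smaller bid is rejected at those infinitely many steps, so inclusion is delayed, proving tightness.
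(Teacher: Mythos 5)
Your Part~1 is sound and coincides with the standard route (the paper's Lemmas~\ref{lemma:lowerbound} and~\ref{lem: upperbound}): the revenue floor plus $q_t\le s$ gives $p_{ser}\le p_t\le p_{mon}$, and $D_t(p)=Q(p)$ for $p\ge p_{t-1}$ gives the decrease-or-jump dichotomy. The genuine gap is in Part~2, which is the heart of the theorem. The claim that every $p>p_{ser}$ is an admission price requires precisely the \emph{uniform} window bound $\Delta_p$ of Definition~\ref{def:eventualinclusion,admission,minAdmission}, and you leave this as an acknowledged obstacle; worse, the one quantitative assertion in your sketch is false. You argue that once demand just below $p$ accumulates, the revenue there beats ``the revenue available at any price above $p$ (where demand stays bounded)''. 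Demand does \emph{not} stay bounded above $p$: for $\delta=1$ pent-up demand accumulates linearly at every price below the running prices, not only below $p$, so this comparison collapses. The correct mechanism (Nisan's Lemmas~9--11; in this paper, the $\delta<1$ analogues are Lemmas~\ref{le:Nisan9} and~\ref{le:Nisan11}) is different: once $D_t(p')\ge s$ for a fixed $p'\in(p_{ser},p)$, the available revenue $p's$ strictly exceeds $\texttt{REV}_{mon}$, so by the dichotomy the chosen price cannot be $p_{mon}$ and must keep strictly decreasing, and one extracts the quantified drop $Q(p_t)-Q(p_{t-1})\ge (a_{t-1})^{-1}\,s\,(p'-p_{ser})/p_{mon}$; summing these drops against the bounded range $Q(p_{ser})-Q(p_{mon})$ forces the price below $p$ within a window independent of the starting time. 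Without this, the central claim $p_{map}=p_{ser}$ is unproven.

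Your ``infinitely often'' argument also has a bookkeeping flaw and a wrong degenerate-case dismissal, though its skeleton (deviation argument plus Ces\`aro on a conservation identity) is salvageable and is genuinely different from the paper's route via Lemma~\ref{lem: nisan13}. The telescoped identity $\sum_{\tau\le t}q_\tau=t\,Q(p_t)$ is valid only when $p_t$ lies weakly below \emph{all} earlier prices; during the final descent this can fail (an earlier excursion may have dipped below the limit $\ell$, or exactly to it), in which case your ``record-low portion'' is empty. The fix is to telescope only from the last jump time $T_0$, which introduces the bounded correction term $Z_{T_0}(p_t)$ (cf.\ Lemma~\ref{le:Nisan-4c}); it vanishes after dividing by $t-T_0$, so the contradiction $Q(\ell)=s$ versus $Q(\ell)\le q_{ser}<s$ survives. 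Finally, $q_{ser}=s$ does \emph{not} imply $p_{ser}=p_{mon}$: a demand curve with $p\,Q(p)$ constant on $[p_{ser},p_{mon}]$ is strictly decreasing, has $q_{ser}=s$ and $p_{ser}<p_{mon}$, and for it the dynamics decrease forever and never revisit $p_{mon}$. So the ``infinitely often'' conclusion genuinely requires the standing non-degeneracy assumption that the monopolist revenue strictly exceeds the equilibrium revenue (equivalently $q_{ser}<s$), which Nisan and the paper's Lemma~\ref{le:delta-vs-price} impose; it cannot be dismissed as a trivial case.
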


We now go back to the case of \emph{quasi-patient} users ($\delta\in(0,1)$). In order to illustrate the dynamics and, in particular, how it differs from the case $\delta=1$, we next revise and adapt an example in \cite[Section~2.2]{Nisan}. 

\begin{example}\label{intro:example}
Let $Q(p)=1-p$ for $p\in [0,1]$, $s=1$ and let $\delta \in (0,1)$. Then, we have the following:
\begin{description}
    \item[($t=1$)] The initial demand is $D_1(p)=Q(p)$ and thus we maximize $pD_1(p)=p(1-p)$ which gives $p_1 = 0.5$ as maximizer and $q_1 = D_1(p_1) = 0.5$. This price $p_1$ is the monopolist price and the revenue is $\texttt{REV}_1 = \texttt{REV}_{mon} = p_1 q_1 = 0.25$. The pent-up demand is $Z_1(p) = D_1(p)- q_1 = \frac{1}{2}-p$ if $p<p_1$ and zero otherwise.
    
    \item[($t=2$)] The total demand is $D_2(p) = \delta Z_1(p) +Q(p) = 1+ \delta/2 - (1+\delta)p$ if $p<p_1$ and $Q(p)$ otherwise. We maximize $pD_2(p)$ and get $p_2 = \frac{2+\delta}{4(1+\delta)}$ as maximizer and thus $q_2 = \frac{2+\delta}{4}$. Note that $p_2 < p_1$ if and only if $\delta>0$. The revenue is $\texttt{REV}_2 = \frac{(2+\delta)^2}{16(1+\delta)}$. Note that $\texttt{REV}_2 > \texttt{REV}_{1}$ if and only if $\delta > -1$, that is, the revenue from step $t=2$ is better than the monopolist revenue.
    The pent-up demand is $Z_2(p) = D_2(p)-q_2 = (2+\delta)/4 - (1+\delta)p$ if $p<p_2$ and zero otherwise.
    
    \item[($t=3$)]  The total demand is $D_3(p) = \delta Z_2(p) +Q(p) = \frac{2\delta + \delta^2 + 4}{4}- (1+\delta + \delta^2)p$ if $p<p_2$ and $Q(p)$ otherwise. Maximization yields $p_3 = \frac{2\delta + \delta^2 + 4}{8(1+\delta + \delta^2)}$. Note that $p_3 < p_2$ if and only if $\delta>0$.
    Hence, $q_3 =\frac{2\delta + \delta^2 + 4}{8}$ and the revenue is $\texttt{REV}_3 = \frac{(2\delta + \delta^2+4)^2}{64 (1+\delta + \delta^2)}$. Note that $\texttt{REV}_3 > \texttt{REV}_1 $ if and only if $\delta>\delta^\star : = 2\sqrt{2}-2\approx 0.828$. That is, if $\delta > \delta^\star$, the price dynamics do not jump and take $p_3$ as above. The pent-up demand is $Z_3(p) = \frac{2\delta + \delta^2 + 4}{8} - (1+\delta + \delta^2)p$ if $p<p_3$ and zero otherwise. However, if $\delta \leq \delta^\star$, the revenue will be less than the monopolist revenue. In this case, the price dynamics would jump up to the monopolist price and take $p_3 = p_{mon} = 0.5$ and thus $q_3 = q_{mon}=0.5$. The pent-up demand would be $Z_3(p) = D_3(p)-q_3 = Q(p) - 0.5 = 0.5-p$ if $p<p_3$ and zero otherwise.
    \end{description}
    The price dynamics for the first 3 steps and  20 steps are depicted in Figure~\ref{fig:example1} for different values of $\delta$.  Figure~\ref{fig:line-delta-0.5} shows the dynamics for a longer period of 100 steps.
\end{example}

\section{Upper Bounds on the Minimum Admission Price}\label{sec:delta-large}
In this section, we provide a bound on the minimum price which guarantees transactions to be \emph{eventually} included, depending on $\delta$. The main result is summarized by the following definition and  theorem (Definition~\ref{def:adapted-quantities}
and Theorem~\ref{thm1}).

\begin{definition}\label{def:adapted-quantities}
    For any continuous decreasing demand function $Q$ and supply $s$ we define the following quantities:
    \begin{align}\label{eq:def-pserbar-qserbar}
        \overline{p}_{ser}:= \frac{p_{ser} \cdot s}{q_{ser}}\ ,  && \overline{q}_{ser}:= Q(\overline{p}_{ser})  \ ,  && \overline{\delta}_{ser} := 1-\frac{q_{ser}-\overline{q}_{ser}}{s} \ .
    \end{align}
    Moreover, for any $\delta > \overline{\delta}_{ser}$, we let $\TM{p_{ser}}{\delta}$ be the price such that~\footnote{This price exists by continuity and monotonicity of $Q$, and because $Q(\overline{p}_{ser}) = \overline{q}_{ser} = q_{ser} - (1-\overline{\delta}_{ser})\cdot s<q_{ser} - (1-\delta)\cdot s\leq q_{ser} = Q(p_{ser})$, where last inequality is due to $\delta \leq 1$.}
\begin{align}\label{eq:def-pserdelta}
    Q(\TM{p_{ser}}{\delta}) = q_{ser} - (1-\delta)\cdot s \ . 
\end{align}
\end{definition}

\begin{example}[Example~\ref{intro:example} continued] For the setting in Example~\ref{intro:example} we observe the minimum admission prices $p_{map}$ (over 100 steps) for every $\delta \in [0,1]$ and display it in Figure~\ref{fig:pmap}.

\begin{figure}[t]
    \centering
    \includegraphics[width=0.6\linewidth]{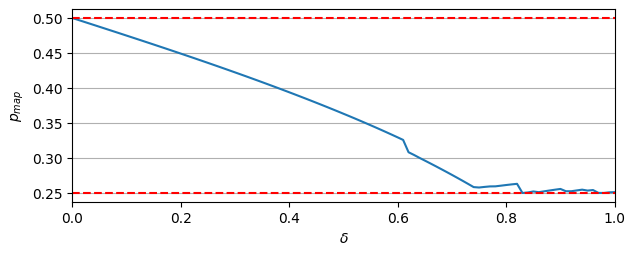}
    \caption{Minimum admission price $p_{map}$ (derived from 100 steps) depending on $\delta$ for daily demand $Q(p)=1-p$ and supply $s=1$. Fluctuations for larger $\delta$ may be due to a lack of steps for the price dynamics to reach its minimum.}
    \label{fig:pmap}
\end{figure}
    
\end{example}

We next state the main result of this section, which highlights the main similarities and differences with the case of impatient users (see Theorem~\ref{th:Nisan} above). 

\begin{theorem}\label{thm1}
   For any strictly decreasing demand function $Q$ and supply $s$ the following holds:
    \begin{enumerate}
        \item \label{itm:thm1:item1} The minimum admission price is at least $p_{ser}$, and thus transactions paying less than this price will never be included. In particular, the dynamics stay always between $p_{ser}$ and $p_{mon}$, that is,  prices $p_t$ satisfy $p_{ser}\leq p_t \leq p_{mon}$ for all $t\geq 1$. 
        Moreover, at each step $t$, the prices either decrease ($p_t < p_{t-1}$) or they jump up to the monopolist price ($p_t=p_{mon}$). 
        \item \label{itm:thm1:finite-delta}
        For every  $\delta>\overline{\delta}_{ser}$, the minimum admission price is at most $\TM{p_{ser}}{\delta}$ defined by \eqref{eq:def-pserdelta} which satisfies $p_{ser}< \TM{p_{ser}}{\delta} < \overline{p}_{ser}$.
        Moreover, the dynamics  pass through the monopolist price $p_{mon}$ infinitely often. 
        \item \label{itm:thm1:item3} Every price  larger than $p_{ser}$ is an admission price for a sufficiently large $\delta$. 
        That is, for every $p^\star >p_{ser}$, there exists $\delta_{min}(p^\star)<1$ such that $p^\star$ is an admission price for every $\delta > \delta_{min}(p^\star)$. 
        Moreover, the dynamics  pass through the monopolist price $p_{mon}$ infinitely often. 
    \end{enumerate}
    Therefore, transactions paying at least $p_{mon}$ are immediately included, and this is tight to guarantee immediate inclusion (as there are infinitely many steps for which paying less will delay admission to a later step). 
\end{theorem}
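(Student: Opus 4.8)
The plan is to dispatch the structural part (item~\ref{itm:thm1:item1}) by one-shot revenue comparisons built on the rewritten demand~\eqref{eq:D-rewritten}. Writing $m_{t-1}:=\min_{1\le\tau\le t-1}p_\tau$, the total demand equals the amplified curve $a_tQ(p)-b_t$ for $p\le m_{t-1}$ and the bare $Q(p)$ for $p>m_{t-1}$. For the floor I would note that at the optimum the block is never over-filled, so the supplied quantity is at most $s$ and $\texttt{REV}_t=p_t\cdot\min(s,D_t(p_t))\le p_t s$; combined with the guaranteed revenue $\texttt{REV}_t\ge\texttt{REV}_{mon}=p_{ser}\,s$ from the Remark preceding the statement, this gives $p_t\ge p_{ser}$ for all $t$, so prices never fall below $p_{ser}$ and cheaper transactions are never served. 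For the ceiling and the dichotomy I would maximize separately over the two regions: on $p>m_{t-1}$ demand is exactly $Q$, so the best attainable revenue is $\texttt{REV}_{mon}$, realized only at $p_{mon}$ (admissible since $m_{t-1}\le p_1=p_{mon}$); on $p\le m_{t-1}$ the demand is strictly larger, so the maximizer lies strictly below the kink, i.e.\ $p_t<m_{t-1}\le p_{t-1}$. Thus at each step either $p_t<p_{t-1}$ or the monopolist prefers the high region and sets $p_t=p_{mon}$, and in both cases $p_t\le p_{mon}$.

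The substantive content is item~\ref{itm:thm1:finite-delta}. The bracketing $p_{ser}<\TM{p_{ser}}{\delta}<\overline{p}_{ser}$ is immediate from monotonicity of $Q$: with $Q(\TM{p_{ser}}{\delta})=q_{ser}-(1-\delta)s$, comparing against $Q(p_{ser})=q_{ser}$ and $Q(\overline{p}_{ser})=\overline{q}_{ser}=q_{ser}-(1-\overline{\delta}_{ser})s$ makes the two inequalities equivalent to $\delta<1$ and $\delta>\overline{\delta}_{ser}$. For the admission bound I would analyze a single \emph{descent phase} after a jump to $p_{mon}$, along which prices strictly decrease by item~\ref{itm:thm1:item1}, and track the total demand at a fixed low price through the recursion $D_t(p)=\delta\bigl(D_{t-1}(p)-q_{t-1}\bigr)+Q(p)$, equivalently the closed form $a_tQ(p)-b_t$ with $a_t\to 1/(1-\delta)$. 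Because $\delta<1$ the accumulation is bounded, and the decisive quantity is the per-round mass lost to expiry; the term $(1-\delta)s$ in the definition of $\TM{p_{ser}}{\delta}$ is precisely this expiry loss at the serial scale. The aim is to show that the revenue obtainable on the amplified curve stays strictly above $\texttt{REV}_{mon}$ for all prices down to $\TM{p_{ser}}{\delta}$ (so the descent does not jump before reaching that level) but cannot be sustained below it, so that every $p>\TM{p_{ser}}{\delta}$ is an admission price. I expect the cleanest route is a \emph{monotone comparison with the fully patient} $\delta=1$ trajectory of Theorem~\ref{th:Nisan}, which descends all the way to $p_{ser}$: since lowering $\delta$ below $1$ only removes a $(1-\delta)$-fraction of pending demand each round, the $\delta<1$ trajectory should track the $\delta=1$ one down to where the cumulative expired mass first matters, consistent with $\TM{p_{ser}}{\delta}\downarrow p_{ser}$ as $\delta\uparrow 1$.

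For the recurrence of $p_{mon}$ (needed in items~\ref{itm:thm1:finite-delta} and~\ref{itm:thm1:item3}), observe that by the dichotomy of item~\ref{itm:thm1:item1} the statement ``$p_t=p_{mon}$ infinitely often'' is equivalent to ``the sequence is not eventually monotone decreasing.'' If it were, it would converge to some $p_\star\ge p_{ser}$ at which the pent-up demand vanishes, forcing $p_\star$ to be an interior optimum of $p\cdot\bigl(Q(p)-\delta q_\star\bigr)/(1-\delta)$, i.e.\ a point of demand elasticity $1-\delta$; I would rule this rest point out in the regime $\delta>\overline{\delta}_{ser}$ by showing the only candidate would require a price strictly below the floor $p_{ser}$ (as in the linear example, where it equals $(1-\delta)/(2-\delta)<p_{ser}$ exactly when $\delta$ is large), contradicting item~\ref{itm:thm1:item1}. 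Hence descents must recur and end in jumps to $p_{mon}$ with bounded gaps, which also furnishes the bounded window $\Delta_p$ required for admission. Item~\ref{itm:thm1:item3} then follows by continuity in $\delta$: for fixed $p^\star>p_{ser}$ one has $\TM{p_{ser}}{\delta}<p^\star$ once $\delta>\delta_{min}(p^\star):=1-\bigl(q_{ser}-Q(p^\star)\bigr)/s<1$, so item~\ref{itm:thm1:finite-delta} applies. Finally the concluding sentence is a direct corollary: immediate inclusion at $p_{mon}$ is forced by $p_t\le p_{mon}$ for all $t$, while tightness follows because $p_t=p_{mon}$ occurs infinitely often, so any bid $p<p_{mon}$ is unserved at infinitely many steps and must wait.

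The main obstacle is the upper bound of item~\ref{itm:thm1:finite-delta}: proving, for an arbitrary strictly decreasing $Q$, that the bounded-accumulation descent provably reaches every level above $\TM{p_{ser}}{\delta}$ and then recurs, rather than converging to an intermediate rest point. The one-shot arguments behind item~\ref{itm:thm1:item1} and the $\delta\to1$ corollaries behind item~\ref{itm:thm1:item3} and the final statement are comparatively routine; the difficulty is entirely in controlling the descent dynamics and in pinning the expiry loss to the exact threshold $Q^{-1}\bigl(q_{ser}-(1-\delta)s\bigr)$, where I anticipate the coupling to the $\delta=1$ trajectory, together with the elasticity characterization of any would-be rest point, to be the technical heart of the proof.
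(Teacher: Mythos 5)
Your item~\ref{itm:thm1:item1} argument and the bracketing $p_{ser}<\TM{p_{ser}}{\delta}<\overline{p}_{ser}$ match the paper (its Lemmas~\ref{lemma:lowerbound} and~\ref{lem: upperbound}), and your reduction of item~\ref{itm:thm1:item3} to item~\ref{itm:thm1:finite-delta} via $\delta_{\min}(p^\star)=1-(q_{ser}-Q(p^\star))/s$ is the paper's. But the two claims carrying all the weight --- visits below $p^\star$ within a \emph{uniformly bounded} window, and recurrence of $p_{mon}$ --- are exactly where your proposal substitutes intentions for arguments, and both routes you suggest have concrete defects. The ``monotone comparison with the $\delta=1$ trajectory'' is not available: the two trajectories genuinely diverge at the first time the $\delta<1$ dynamics jumps while the $\delta=1$ one keeps descending (the paper's own Example~\ref{intro:example} exhibits this at $t=3$ for $\delta<2\sqrt{2}-2$), so the $\delta<1$ path does not ``track'' the patient one, and no coupling lemma appears anywhere in the paper. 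Your ``single descent phase'' framing is also wrong in general: a descent can be interrupted by a jump to $p_{mon}$ well before reaching $\TM{p_{ser}}{\delta}$. What the paper actually proves (Lemmas~\ref{le:lemma4}, \ref{le:Nisan9}, \ref{le:exceed-s}, \ref{le:Nisan11}) is a \emph{cross-phase accumulation} argument: fix $p=(p^\star+p_{ser})/2$; as long as every price stays above $p^\star$ (jumps included), the pent-up demand at $p$ grows, by the equality case of Lemma~\ref{le:lemma4}, toward $(Q(p)-Q(p^\star))/(1-\delta)$, which exceeds $s$ precisely under the gap condition \eqref{eq:extra-condition} --- this is where $\delta>\delta_{\min}(p^\star)$ enters; once $D_t(p)\geq s$, no jump is possible and each step forces $Q(p_t)-Q(p_{t-1})\geq a_{t-1}^{-1}\,s(p-p_{ser})/p_{mon}\geq(1-\delta)\,s(p-p_{ser})/p_{mon}$, so, since $Q$ can vary by at most $Q(p_{ser})-Q(p_{mon})$ over the admissible price range, the price drops below $p^\star$ after a bounded number of steps. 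Nothing in your sketch yields this uniform window $\Delta_p$, which Definition~\ref{def:eventualinclusion,admission,minAdmission} requires.

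Your rest-point argument for recurrence is likewise gapped and is not the paper's route: it presumes differentiability of $Q$ (to speak of elasticity) when $Q$ is only continuous and strictly decreasing; it needs a limit-passing step (prices never actually sit at the putative rest point, so the stationarity condition must be extracted by a continuity argument you do not give); and the decisive exclusion --- that any rest point would lie below $p_{ser}$ whenever $\delta>\overline{\delta}_{ser}$ --- is checked only on the linear example, not for general $Q$. The paper avoids fixed points entirely (Lemmas~\ref{lem: nisan13} and~\ref{le:delta-vs-price}): since $p^\star<\overline{p}_{ser}$ one can choose $q^\star>q_{ser}$ with $p^\star q^\star<p_{mon}q_{mon}=p_{ser}s$; if the dynamics jumped to $p_{mon}$ only finitely often, then by Lemma~\ref{le:Nisan11} eventually $p_t<p^\star$ forever, and revenue dominance over $p_{mon}$ forces $q_t\geq p_{ser}s/p_t>q^\star$, contradicting $q_t\leq q_{ser}<q^\star$. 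In short: item~\ref{itm:thm1:item1} is fine, but the technical heart of items~\ref{itm:thm1:finite-delta} and~\ref{itm:thm1:item3} --- the accumulation mechanism with its uniform window and the $(p^\star,q^\star)$ contradiction --- is missing, and the substitutes you propose would not work as stated.
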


\begin{proof}[Proof Idea] The proof is deferred to Appendix~\ref{app:thm1}. Here we discuss only the main intuition.   
    According to \eqref{eq:D-rewritten}-\eqref{eq:at-bt-quantities}, the pent-up demand cannot increase arbitrarily in our setting ($\delta<1$). We thus need to identify the pairs $\delta$ and $p^*$ for which, at price $p^*$, the pent-up demand becomes ``sufficiently high'' to induce the dynamics to take this or a smaller price. Moreover, we have to show that the dynamics cannot simply keep decreasing the prices, but must eventually jump back to the monopolist price, and then again reach $p^*$ or a smaller price as before infinitely often. The quantities in Definition~\ref{def:adapted-quantities} can be used to identify such pairs $\delta$ and $p^*$, which intuitively correspond to $p^*$ being ``not too small'' (Item~\ref{itm:thm1:finite-delta}) or $\delta$ being ``large enough'' (Item~\ref{itm:thm1:item3}). 
\end{proof}

The above result allows us to compare the case of quasi-patient users ($\delta <1$) with the case of patient users ($\delta=1$) analyzed in Theorem~\ref{th:Nisan}. Item~\ref{itm:thm1:item1} in Theorem~\ref{thm1} says that users need to pay at least $p_{ser}$ in order to have their transactions admitted, though this may still not be enough depending on $\delta$. How much they have to pay in order to guarantee inclusion is addressed in Items~\ref{itm:thm1:finite-delta} and~\ref{itm:thm1:item3}.
Item~\ref{itm:thm1:finite-delta} in Theorem~\ref{thm1} provides an upper bound on the minimum admission price, provided $\delta$ being large enough (condition $\delta > \overline{\delta}_{ser}$). This condition on $\delta$ is necessary as implied by the results in the next section, where we prove  lower bounds on the minimum admission price. Item~\ref{itm:thm1:item3} shows that the minimum admission price tends to $p_{ser}$ from above for $\delta \rightarrow 1$. Finally, Theorem~\ref{thm1} implies that the price dynamic does not converge to some price $p'<p_{mon}$. This is not obvious, as in our setting the pent-up demand does not grow arbitrarily and thus, in principle, it might be possible to have a sequence of decreasing prices.

\section{Lower Bounds on the Minimum Admission Price}\label{sec:lower-bounds}
In this section, we complement the results in the previous section, by showing that transactions below a certain price will  \emph{never} be included, depending on $\delta$.

Recall that the monopolist aims to choose a price $p$ maximizing the revenue $\texttt{REV}_t(p) := p\cdot \min(s,D_t(p))$ at the current step $t$. At every time step, there is always the option to choose the monopolist price $p_{mon}$ and receive revenue $\texttt{REV}_{mon}$. We can  compare $\texttt{REV}_{t}(p)$ and $\texttt{REV}_{mon}$ at any $t$ by considering the following function:
\begin{equation}\label{revenue-diff}
    f_t(p) := p\cdot D_t(p) - p_{mon}q_{mon}\ .
\end{equation}
Since the revenue for price $p$ satisfies $\texttt{REV}_t(p) \leq p\cdot D_t(p)$, if the function above is negative for some $p$, it means that the revenue at $p$ is worse than $\texttt{REV}_{mon}$, and therefore the next price $p_t$ cannot be $p$.
Observe that evaluating $D_t(p)$ and thus $f_t(p)$ is rather complex because of the ``previous history'' component involving $q_{t-1},\ldots,q_1$ -- see Equations~\eqref{eq:D-rewritten} and \eqref{eq:at-bt-quantities}. 
We next provide a simpler function  to evaluate for a generic $Q$, which still can be used to determine ``forbidden'' prices for the dynamics: 
\begin{align}\label{eq:lb-improved-F}
        F_t(p) := p \cdot   (a_t\cdot Q(p) - (a_t-1) q_{mon}) - p_{mon}q_{mon} \ , && a_t = \sum_{i=0}^t \delta^t  \ . 
\end{align}

Next, we relate $F_t$ to the dynamics.
\begin{theorem}\label{th:lb-general}
    For any $p$ such that 
    $F_t(p)<0$ it cannot be $p_t = p$.
\end{theorem}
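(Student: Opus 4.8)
The plan is to prove the contrapositive: whenever the monopolist actually selects price $p$ at step $t$ (that is, $p_t=p$), the quantity $F_t(p)$ must be nonnegative. The starting point is the observation recorded just before \eqref{revenue-diff}: since the monopolist can always fall back on $p_{mon}$ and collect at least $\texttt{REV}_{mon}=p_{mon}q_{mon}$ (because $D_t\ge Q$), optimality of the choice $p_t=p$ forces $\texttt{REV}_t(p)\ge \texttt{REV}_{mon}$. As $\texttt{REV}_t(p)=p\cdot\min(s,D_t(p))\le p\cdot D_t(p)$, this immediately gives $f_t(p)=p\cdot D_t(p)-p_{mon}q_{mon}\ge 0$. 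Hence it suffices to show that $F_t$ dominates the true revenue gap, $F_t(p)\ge f_t(p)$, at every price the dynamics could select.

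Comparing \eqref{revenue-diff} with \eqref{eq:lb-improved-F}, cancelling the common term $p_{mon}q_{mon}$ and dividing by $p>0$, the inequality $F_t(p)\ge f_t(p)$ reduces to the single demand inequality
\begin{equation*}
 D_t(p) \;\le\; a_t\,Q(p) - (a_t-1)\,q_{mon}.
\end{equation*}
I would establish this from the rewritten total demand \eqref{eq:D-rewritten}--\eqref{eq:at-bt-quantities}. On the region $p\le \min_{1\le\tau\le t-1}p_\tau$ we have $D_t(p)=a_t Q(p)-b_t$, so the claim is equivalent to $b_t\ge (a_t-1)q_{mon}$. The key auxiliary fact is that every supplied quantity satisfies $q_\tau\ge q_{mon}$: the dynamics never set a price above $p_{mon}$ (Item~\ref{itm:thm1:item1} of Theorem~\ref{thm1}), and since $Q$ is decreasing and $D_\tau\ge Q$ we get $q_\tau=D_\tau(p_\tau)\ge Q(p_\tau)\ge Q(p_{mon})=q_{mon}$. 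Substituting $q_j\ge q_{mon}$ into $b_t=\sum_{j=1}^{t-1}q_j\delta^{t-j}$ and comparing termwise with $(a_t-1)q_{mon}=q_{mon}\sum_{j=1}^{t-1}\delta^{t-j}$ yields $b_t\ge (a_t-1)q_{mon}$, as required.

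It remains to treat the complementary region $p>\min_{1\le\tau\le t-1}p_\tau$, where $D_t(p)=Q(p)$ and the desired inequality becomes $(a_t-1)(Q(p)-q_{mon})\ge 0$, i.e.\ $Q(p)\ge q_{mon}$, which holds exactly when $p\le p_{mon}$. Since $\min_{1\le\tau\le t-1}p_\tau\le p_1=p_{mon}$, any such $p$ with $p\le p_{mon}$ is covered, and for $p>p_{mon}$ the dynamics cannot select $p$ at all (again by $p_t\le p_{mon}$, equivalently because there $\texttt{REV}_t(p)=p\cdot\min(s,Q(p))\le \texttt{REV}_{mon}$ while $p_{mon}$ is at least as good). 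I expect the main obstacle to be precisely this case analysis near $p_{mon}$, together with the clean justification of $q_\tau\ge q_{mon}$; both hinge on the upper-bound fact $p_t\le p_{mon}$, so I would either invoke Theorem~\ref{thm1}(\ref{itm:thm1:item1}) or re-derive it directly from the fact that demand above $p_{mon}$ coincides with $Q$ and $p_{mon}$ already maximizes revenue against $Q$. Once the demand inequality is in hand, the chain $F_t(p)\ge f_t(p)\ge 0$ closes the contrapositive and proves the theorem.
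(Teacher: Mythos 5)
Your proof is correct and follows essentially the same route as the paper's: both reduce to the key inequality $b_t \geq (a_t-1)q_{mon}$ (from $p_\tau \leq p_{mon}$ and monotonicity of $Q$, hence $q_\tau \geq q_{mon}$), conclude $F_t(p) \geq f_t(p)$ on selectable prices, and use that $f_t(p)<0$ rules out $p_t=p$ because the monopolist can always fall back on $p_{mon}$ for revenue $p_{mon}q_{mon}$. If anything, your explicit two-region case analysis of \eqref{eq:D-rewritten} (including the region $\min_\tau p_\tau < p \leq p_{mon}$ and the vacuous region $p>p_{mon}$) is more careful than the paper's own proof, which writes $D_t(p)=a_tQ(p)-b_t$ as if it held for all $p$.
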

\begin{proof}
    See Appendix~\ref{app:remaining_proofs}.
\end{proof}
Intuitively, this  theorem states that  if we show $F_t(p)<0$ for all $p<p^*$, then the minimum admission price is at least $p^*$.

\subsection{An illustrative example}\label{ex:lb-refined}

We next apply the result in Theorem~\ref{th:lb-general} to one of the simplest demand functions and show that, even in this case, price $p_{ser}$ is not a tight bound for the minimum admission price. 

\begin{proposition}[Lower bound]\label{prop: improvedlowerbound}
    For demand function $Q(p) = 1-p$ and $s\geq \frac{1}{2}$ the price at step $t\geq 1$ is at least $p^\star_t = \dfrac{1-{\delta}}{2(1 - \delta^t)}$ for any $\delta\in (0,1)$. Therefore, the minimum admission price is at least $p^\star= \frac{1-\delta}{2}$.
\end{proposition}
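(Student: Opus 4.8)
The plan is to apply Theorem~\ref{th:lb-general} directly: it suffices to show that $F_t(p)<0$ for every $p<p^\star_t$, since then the dynamics cannot take any price strictly below $p^\star_t$ at step $t$, giving $p_t\geq p^\star_t$. So the whole argument reduces to analyzing the sign of the explicit function $F_t$ from \eqref{eq:lb-improved-F} for the demand $Q(p)=1-p$.

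First I would pin down the monopolist quantities. The unconstrained revenue $p(1-p)$ is maximized at $p=\tfrac12$ with $Q(\tfrac12)=\tfrac12$, and the hypothesis $s\geq\tfrac12$ guarantees the supply constraint $\min(s,Q(p))$ is slack there; hence $p_{mon}=\tfrac12$, $q_{mon}=\tfrac12$ and $p_{mon}q_{mon}=\tfrac14$. This is precisely the point where $s\geq\tfrac12$ is used: for $s<\tfrac12$ the constraint would bind and yield $p_{mon}=1-s$ instead, changing the bound.

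Next I substitute into $F_t$. Writing $a_t=1+\delta+\cdots+\delta^{t-1}=\frac{1-\delta^t}{1-\delta}$ as in \eqref{eq:at-bt-quantities}, a short computation gives
\[
F_t(p)=p\left(a_t(1-p)-(a_t-1)\tfrac12\right)-\tfrac14=-a_t p^2+\tfrac{a_t+1}{2}\,p-\tfrac14,
\]
so $F_t(p)<0$ is equivalent to the upward-opening quadratic $a_t p^2-\tfrac{a_t+1}{2}p+\tfrac14>0$. The one non-mechanical observation is that its discriminant collapses to the perfect square $\frac{(a_t-1)^2}{4}$, so the two roots are exactly $p=\tfrac12$ and $p=\frac{1}{2a_t}$. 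Consequently $F_t(p)<0$ for all $p<\frac{1}{2a_t}$ (and also for $p>\tfrac12$, consistently with $p_t\leq p_{mon}$). Since $\frac{1}{2a_t}=\frac{1-\delta}{2(1-\delta^t)}=p^\star_t$, Theorem~\ref{th:lb-general} yields $p_t\geq p^\star_t$, which is the per-step claim.

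Finally, for the minimum admission price I would note that $p^\star_t=\frac{1-\delta}{2(1-\delta^t)}$ is decreasing in $t$, since the denominator $1-\delta^t$ increases, and tends to $\frac{1-\delta}{2}$ as $t\to\infty$; hence $p_t\geq p^\star_t\geq\frac{1-\delta}{2}$ for every $t$. Thus no price strictly below $\frac{1-\delta}{2}$ is ever offered, so any transaction paying less than $\frac{1-\delta}{2}$ is never included, giving $p_{map}\geq\frac{1-\delta}{2}$. The proof is essentially a clean calculation once Theorem~\ref{th:lb-general} is invoked; the only steps requiring care are the correct determination of $p_{mon},q_{mon}$ under $s\geq\tfrac12$ and recognizing that the smaller root of the quadratic coincides exactly with the claimed bound $p^\star_t$, whose $t\to\infty$ limit then produces the uniform bound on $p_{map}$.
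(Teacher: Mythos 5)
Your proof is correct and takes essentially the same route as the paper's: invoke Theorem~\ref{th:lb-general}, compute $F_t$ explicitly for $Q(p)=1-p$ (using $p_{mon}=q_{mon}=\tfrac12$, which is where $s\geq\tfrac12$ enters), and identify the two roots $\tfrac12$ and $p^\star_t=\frac{1-\delta}{2(1-\delta^t)}$. If anything, your write-up is slightly more careful than the paper's, since you make explicit both the perfect-square discriminant and the fact that $p^\star_t$ decreases monotonically to $\tfrac{1-\delta}{2}$, which is what turns the per-step bound into the uniform bound on $p_{map}$.
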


\begin{proof}
    See Appendix~\ref{app:remaining_proofs}.
\end{proof}

    \begin{remark}\label{rem:no-tight-line}
        According to the previous result, for $\delta = \frac{1}{2}$ the dynamics never go below $\frac{1}{4}$. That is, the minimum admission price is at least $\frac{1}{4}$ and transactions paying less than this price are never admitted. We note experimentally that this bound may not be tight, as the dynamics for $\delta = \frac{1}{2}$ never passes value $\approx 0.363$ which is computed over 100 steps (see Figure~\ref{fig:line-delta-0.5} showing the dynamics for different values of $\delta$).

        \begin{figure}[t]
    \centering
    \includegraphics[width=\linewidth]{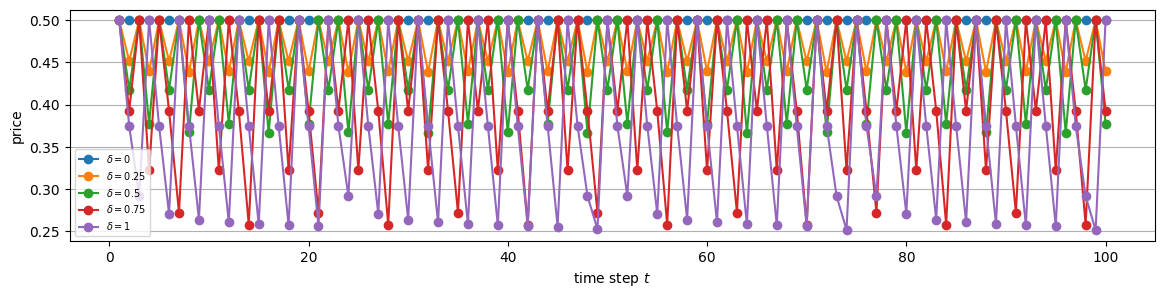}
    \caption[Caption without FN]{The dynamics for $Q(p)=1-p$ and $\delta \in [0,0.25,0.5,0.75,1]$ when supply $s=1$ (see Example~\ref{intro:example}). The smallest prices for each $\delta$ correspond to the respective minimum admission prices.\footnotemark}
    \label{fig:line-delta-0.5}
\end{figure}
\footnotetext{Note that this figure displays only 100 iterations and for a price dynamic to reach its minimum it may take more steps.}
    \end{remark}

    \begin{remark}
    The minimum admission price for $\delta=1$ goes to 0 for increasing $s$. Instead, for any $\delta < 1$ and any $s\geq \frac{1}{2}$, the minimum admission price is bounded away from 0, and is in fact at least $\frac{1-\delta}{2}$ as shown in the proposition above.
    \end{remark}

    \begin{remark}\label{re;lb-delta-small-question}
    For any arbitrarily small $\delta> 0$, Proposition~\ref{prop: improvedlowerbound} leaves open the possibility that  the minimum admission price is \emph{strictly smaller} than  $p_{mon}=\frac{1}{2}$. In the next section, we show that this is in general not the case, as there are other demand functions for which the minimum admission price \emph{equals} the monopolist price already for  $\delta>0$.
    \end{remark}

\subsection{The Admission Price Must Depend on $Q$}\label{sec: Qepsilon}
In this section, we consider the following class $\mathcal{Q}$ of daily demand functions:
\begin{equation}\label{piecewise_fct}
    Q_{\epsilon}(p) = \begin{cases}
        \frac{1}{2}+\epsilon - 2 \epsilon p, & 0\leq p\leq \frac{1}{2}\\
        1-p, & \frac{1}{2} \leq p\leq 1
    \end{cases}
\end{equation}
A function $Q_{\epsilon}$ of this class $\mathcal{Q}$ is depicted in Figure~\ref{fig:piecewise}. In particular, note that the slope of the function on the interval $[0,0.5]$ is depending on $\epsilon\geq 0$. On the remaining interval $[0.5,1]$ the function is just $1-p$.

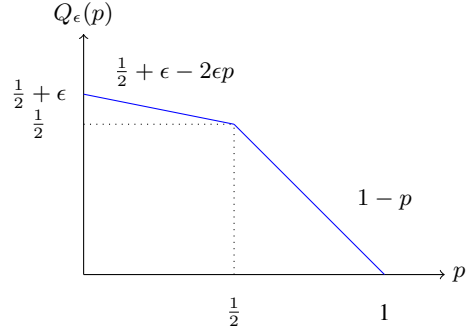
\begin{figure}[t]
    \centering
    \begin{tikzpicture}[scale=4]
  \draw[->] (0, 0) -- (1.2, 0) node[right] {$p$};
  \draw[->] (0, 0) -- (0, 0.8) node[above] {$Q_{\epsilon}(p)$};
  \draw[dotted] (0.5, 0) -- (0.5, 0.5);
  \draw[dotted] (0, 0.5) -- (0.5, 0.5);
  \node[xshift=-0.6cm] at (0,0.5) {$\frac{1}{2}$};
  \node[yshift=-0.5cm] at (0.5,0) {$\frac{1}{2}$};
  \node[xshift=-0.6cm] at (0,0.6) {$\frac{1}{2}+\epsilon$};
  \node[yshift=-0.5cm] at (1,0) {1};
  \node[yshift=-0.5cm] at (0.3,0.8) {$\frac{1}{2}+\epsilon - 2 \epsilon p$};
  \node[xshift=1cm] at (0.75,0.25) {$1-p$};
  \draw[domain=0:0.5, smooth, variable=\x, blue] plot ({\x}, {0.5+0.1- 2* 0.1*\x});
  \draw[domain=0.5:1, smooth, variable=\x, blue] plot ({\x}, {1-\x});
\end{tikzpicture}
    \caption{Daily demand function of the class $\mathcal{Q}$.}
    \label{fig:piecewise}
\end{figure}

A few observations are in place. 
\begin{description}
    \item [\textbf{Obs 1.}] Theorem~\ref{thm1} provides an upper bound on the minimum admission price if 
$\delta > \overline{\delta}_{ser}$.  According to \eqref{eq:def-pserbar-qserbar}, this condition is equivalent to   $\frac{q_{ser}-\overline{q}_{ser}}{1-\delta}>s$.\footnote{
For $\epsilon=\frac{1}{2}$ and $s=1$, the condition in \textbf{Obs 1} is to $\delta > 11/12 \approx 0.917$, $\frac{Q_{\epsilon}(p)-Q_{\epsilon}(p')}{1-\delta(\epsilon)}>s \iff \delta > 11/12 \approx 0.917.$
}
    \item [\textbf{Obs 2.}] For the type of demand function $Q_{\epsilon} \in \mathcal{Q}$, Theorem~\ref{thm1} may not apply, unless \textbf{Obs 1} is satisfied. This also depends on the value of $\epsilon$.
    \item [\textbf{Obs 3.}] The lower bound $p_{ser}$ is not tight (Remark~\ref{rem:no-tight-line} deals with $\epsilon=\frac{1}{2}$).
\end{description}

These observations naturally suggest to obtain \emph{lower bounds} on the minimum admission price for the class of functions above.

\begin{example}\label{ex:low-delta-Q-lb}
   Theorem~\ref{thm1} applies only for $\delta$ that are large enough ($\delta > \bar{\delta}_{ser}$). The necessary lower bound for $\delta$ for demand function of class $\mathcal{Q}$ is calculated below, that is, for the demand functions as in Equation~\eqref{piecewise_fct} and $s=1$, we have (by Definition~\ref{def:adapted-quantities})
       $
        p_{mon} = \frac{1}{2} = q_{mon}, p_{ser} = \frac{1}{4}, q_{ser} = \frac{1 + \epsilon}{2}, \overline{p}_{ser}=  \frac{1}{2(1 + \epsilon)} ,
        \overline{q}_{ser} = \frac{1}{2} + \frac{\epsilon^2}{1 + \epsilon},  
        \delta_{\min}(\overline{p}_{ser}) =   1 -\frac{\epsilon}{2} + \frac{\epsilon^2}{1 + \epsilon}, 
        \TM{p_{ser}}{\delta} = \frac{1}{4} + \frac{1-\delta}{\epsilon}$.
    Thus, for $Q(p)=1-p$, and $\epsilon=\frac{1}{2}$, we have 
    $\TM{p_{ser}}{\delta} = \frac{1}{4} + 2(1-\delta)$ for all $\delta > \bar{\delta}_{ser}=\delta_{\min}(\overline{p}_{ser}) =  11/12$. 
\end{example}

Next, we state three results for the class of demand functions $\mathcal{Q}$. First, for any $\delta$ we can find a strictly decreasing demand function such that the price dynamics are stuck at $p_{mon}$. Second, for all demand functions $Q_{\epsilon} \in \mathcal{Q}$ with $\epsilon < \frac{1}{2}$, we find a $\delta$ such that the same holds. Finally, we find a lower bound on the minimum admission price for the remaining case $\epsilon \geq \frac{1}{2}$.
\begin{theorem}\label{prop:Q-eps-fun} The following holds:
    \begin{enumerate}
        \item \label{item1} For every $\delta$ there exists a strictly decreasing $Q_{\epsilon} \in \mathcal{Q}$ s.t. the price dynamics stays at $p_{mon}$, i.e. 
        \begin{equation}\label{eq:stay}
            p_t = p_{mon}, \quad \text{for all $t$.}
        \end{equation}
        \item \label{item2} Conversely, for any strictly decreasing $Q_{\epsilon} \in \mathcal{Q}$ with $\epsilon < \frac{1}{2}$ we find a $\delta$ such that \eqref{eq:stay} holds.
        \item \label{item3} For every $Q_\epsilon \in \mathcal{Q}$ with $\epsilon \geq \frac{1}{2}$, the minimum admission price is at least $p^\star = \frac{1-\delta}{4\epsilon} < p_{mon}$.
    \end{enumerate} 
\end{theorem}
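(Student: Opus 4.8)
The plan is to derive all three statements from a single computation: substituting the piecewise demand $Q_\epsilon$ into the ``forbidden price'' function $F_t$ of \eqref{eq:lb-improved-F} and combining the resulting sign analysis with Theorem~\ref{th:lb-general} and Item~\ref{itm:thm1:item1} of Theorem~\ref{thm1}. On the relevant range $0\le p\le p_{mon}=\tfrac{1}{2}$ we have $Q_\epsilon(p)=\tfrac{1}{2}+\epsilon-2\epsilon p$ and $q_{mon}=\tfrac{1}{2}$, so $F_t$ becomes a quadratic in $p$ with negative leading coefficient. A direct computation gives $F_t(p)=-2a_t\epsilon\,p^2+(a_t\epsilon+\tfrac{1}{2})p-\tfrac{1}{4}$, whose discriminant is the perfect square $(a_t\epsilon-\tfrac{1}{2})^2$; hence $F_t$ factors with roots $p=\tfrac{1}{2}$ and $p=r_t:=\tfrac{1}{4a_t\epsilon}$. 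Since the parabola opens downward, $F_t(p)<0$ exactly for $p$ outside the closed interval between these two roots. Because $a_t=\tfrac{1-\delta^t}{1-\delta}$ is strictly increasing in $t$ with limit $a_\infty=\tfrac{1}{1-\delta}$, the root $r_t$ is strictly decreasing in $t$ with infimum $r_\infty=\tfrac{1-\delta}{4\epsilon}$, which is precisely the quantity $p^\star$ of Item~\ref{item3}. I would record this factorization and these monotonicity facts once, then invoke them in all three parts.

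For Items~\ref{item1} and~\ref{item2} I would prove $p_t=p_{mon}$ for every $t$ by showing $F_t(p)<0$ for all $p<\tfrac{1}{2}$ and all $t$, which by Theorem~\ref{th:lb-general} forbids every price strictly below $p_{mon}$; combined with $p_t\le p_{mon}$ from Item~\ref{itm:thm1:item1} this forces $p_t=p_{mon}$. The parabola is negative on all of $[0,\tfrac{1}{2})$ exactly when both roots are at least $\tfrac{1}{2}$, i.e. when $r_t\ge\tfrac{1}{2}$ for every $t$. By the monotonicity above this holds for all $t$ iff it holds in the limit, i.e. iff $r_\infty=\tfrac{1-\delta}{4\epsilon}\ge\tfrac{1}{2}$, equivalently $\delta\le 1-2\epsilon$. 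Item~\ref{item1} then follows by fixing $\delta\in(0,1)$ and choosing any $\epsilon\in(0,\tfrac{1-\delta}{2}]$ (positive, so $Q_\epsilon$ is strictly decreasing); Item~\ref{item2} follows symmetrically by fixing $\epsilon\in(0,\tfrac{1}{2})$ and choosing $\delta\in(0,1-2\epsilon]$, a nonempty subinterval of $(0,1)$ precisely because $\epsilon<\tfrac{1}{2}$.

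For Item~\ref{item3}, with $\epsilon\ge\tfrac{1}{2}$ and $\delta\in(0,1)$ we have $p^\star=r_\infty=\tfrac{1-\delta}{4\epsilon}\le\tfrac{1-\delta}{2}<\tfrac{1}{2}=p_{mon}$, and moreover $r_t=\tfrac{1}{4a_t\epsilon}\le\tfrac{1}{4\epsilon}\le\tfrac{1}{2}$ for every $t$, so $r_t$ is the smaller of the two roots. For any $p<p^\star$ we then have $p<r_\infty<r_t$, i.e. $p$ lies strictly below the smaller root, whence $F_t(p)<0$ for every $t$. Theorem~\ref{th:lb-general} gives $p_t\ne p$ for all such $p$ and all $t$, so $p_t\ge p^\star$ throughout the dynamics; consequently no transaction paying strictly less than $p^\star$ is ever included, and the minimum admission price is at least $p^\star$, as claimed.

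The main obstacle is the passage from the family of finite-$t$ conditions to a single sharp threshold: the sign of $F_t$ on $[0,\tfrac{1}{2})$ varies with $t$ through $r_t$, and one must argue that the binding case is the limit $t\to\infty$ rather than any finite step. This is exactly where strict monotonicity of $a_t$ and the identification $\inf_t r_t=r_\infty$ (not attained) are needed, with particular care at the boundary $\delta=1-2\epsilon$, where $r_\infty=\tfrac{1}{2}$ yet every finite $r_t$ is strictly larger, so $F_t(p)<0$ still holds for all $p<\tfrac{1}{2}$. A secondary point worth spelling out is the behaviour at $p=p_{mon}$ itself, where $F_t(p_{mon})=0$ rather than negative: there Theorem~\ref{th:lb-general} gives no information, so the conclusion $p_t=p_{mon}$ in Items~\ref{item1} and~\ref{item2} genuinely relies on the upper bound $p_t\le p_{mon}$ from Item~\ref{itm:thm1:item1}, and one should note that $D_t(p_{mon})=q_{mon}\le s$ so that $p_{mon}$ indeed realises revenue $\texttt{REV}_{mon}$ and is a legitimate maximiser.
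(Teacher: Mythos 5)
Your proposal is correct and follows essentially the same route as the paper's own proof: both substitute $Q_\epsilon$ into $F_t$ from Theorem~\ref{th:lb-general}, observe the resulting concave quadratic has roots $p_{mon}=\tfrac{1}{2}$ and $\tfrac{1}{4a_t\epsilon}$, and derive all three items from the sign of $F_t$ together with the limit $t\to\infty$ (plus $p_t\le p_{mon}$ from Theorem~\ref{thm1} to pin down $p_t=p_{mon}$ in Items~\ref{item1} and~\ref{item2}). One point in your favour: you state the monotonicity in the correct direction (the root $r_t$ \emph{decreases} to $r_\infty=\tfrac{1-\delta}{4\epsilon}$ from above), whereas the paper's proof asserts $p^\star_t<p^\star$, which is a typo --- the inequality you use is the one the argument genuinely needs, both for the ``all finite $t$'' reduction in Items~\ref{item1}--\ref{item2} and for concluding $p_t\ge p^\star$ in Item~\ref{item3}.
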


\begin{proof}
    See Appendix~\ref{app:remaining_proofs}.
\end{proof}

The first two items of the theorem above show that the dynamics ``collapse'' to the monopolist price for any $\delta > 0$ for some demand function, answering negatively the question raised in Remark~\ref{re;lb-delta-small-question}. The last item of the theorem generalizes the result in Proposition~\ref{prop: improvedlowerbound}, as the latter corresponds to $\epsilon = 1/2$. Note that the above result for $\epsilon\rightarrow 0$ is in line with Nisan's extension to constant demand functions (see Appendix~\ref{app:constant-func}).

\subsection{General demand function $Q$}
So far we have applied the tool from Theorem~\ref{th:lb-general} to specific demand functions. In this section, we instead derive a general bound. 
Let $Q$ be a general non-increasing demand function. We provide a lower bound for the minimum admission price that is higher than $p_{ser}$.

\begin{proposition}\label{prop:generalQ}
    Let $Q$ be a general non-increasing demand function such that $Q(0) \leq s+\delta (q_{mon}-s)$. Then, the price
    \begin{equation}
        p^\star := \frac{p_{mon}q_{mon}}{\frac{Q(0)}{1-\delta} + q_{mon} - \frac{q_{mon}}{1-\delta}}, 
    \end{equation}
    is a lower bound for the minimum admission price. In particular, $p^\star \geq p_{ser}$.
\end{proposition}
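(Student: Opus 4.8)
The plan is to invoke Theorem~\ref{th:lb-general} and reduce everything to showing that $F_t(p)<0$ for \emph{every} step $t$ and every price $p$ with $0<p<p^\star$. Once this is established, Theorem~\ref{th:lb-general} forbids $p_t=p$ for all such $p$, so $p_t\geq p^\star$ at every step. Since a price strictly below all the $p_t$ can never satisfy the eventual-inclusion condition of Definition~\ref{def:eventualinclusion,admission,minAdmission}, no transaction paying less than $p^\star$ is ever included, and therefore $p^\star$ is a lower bound on the minimum admission price. The remaining claim $p^\star\geq p_{ser}$ is a separate, purely algebraic consequence of the hypothesis, to be handled at the end.

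First I would rewrite the auxiliary function in the revealing form
\begin{equation*}
F_t(p)=a_t\,p\bigl(Q(p)-q_{mon}\bigr)+q_{mon}\,(p-p_{mon}),
\end{equation*}
which isolates the dependence on $t$ through $a_t$. The key structural observation is that the coefficient of $a_t$ is nonnegative on the range that matters. Indeed, I would first check $p^\star\leq p_{mon}$: this is equivalent to $q_{mon}\leq C$, where $C:=\tfrac{Q(0)}{1-\delta}+q_{mon}-\tfrac{q_{mon}}{1-\delta}$ is the denominator of $p^\star$, and it holds trivially since $Q(0)\geq q_{mon}$ ($Q$ non-increasing, $q_{mon}=Q(p_{mon})$). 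Hence every $p<p^\star$ lies in $(0,p_{mon})$, where $Q(p)\geq q_{mon}$, so $p\bigl(Q(p)-q_{mon}\bigr)\geq 0$. This makes $F_t$ nondecreasing in $a_t$, and since $a_t\leq\frac{1}{1-\delta}$ with the bound approached as $t\to\infty$, the worst case over $t$ is obtained by substituting $a_t\mapsto\frac{1}{1-\delta}$.

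Next, using the monotonicity bound $Q(p)\leq Q(0)$ together with the nonnegative factor $\frac{1}{1-\delta}\,p$, I would estimate
\begin{equation*}
F_t(p)\ \leq\ \tfrac{1}{1-\delta}\,p\bigl(Q(0)-q_{mon}\bigr)+q_{mon}(p-p_{mon}),
\end{equation*}
whose right-hand side equals exactly $p\,C-p_{mon}q_{mon}$. This is strictly negative precisely when $p<p_{mon}q_{mon}/C=p^\star$, which closes the lower-bound argument uniformly in $t$. Finally, for $p^\star\geq p_{ser}$ I would use $p_{ser}=p_{mon}q_{mon}/s$, so that $p^\star\geq p_{ser}\iff C\leq s$; rearranging, $C\leq s$ is equivalent to $Q(0)\leq q_{mon}\delta+(1-\delta)s=s+\delta(q_{mon}-s)$, which is exactly the stated hypothesis. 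The only genuinely delicate points are (i) justifying that $p(Q(p)-q_{mon})\geq 0$ on $(0,p^\star)$, for which the inequality $p^\star\leq p_{mon}$ is essential, and (ii) making the estimate uniform in $t$, which is handled cleanly by the monotone bound $a_t\leq\frac{1}{1-\delta}$; everything else is routine algebra.
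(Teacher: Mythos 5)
Your proposal is correct and follows essentially the same route as the paper: both reduce the claim to Theorem~\ref{th:lb-general}, bound $Q(p)$ by $Q(0)$ to obtain a linear function in $p$ whose root gives $p^\star$, and verify $p^\star \geq p_{ser}$ by the same algebraic rearrangement of the hypothesis $Q(0)\leq s+\delta(q_{mon}-s)$. The only (cosmetic) difference is that you make the uniformity in $t$ explicit by first bounding $a_t \leq \tfrac{1}{1-\delta}$ (after checking the sign of the coefficient $p\,(Q(p)-q_{mon})$ via $p^\star\leq p_{mon}$), whereas the paper keeps $a_t$, computes the per-step root $\overline{\overline{p}}_t$ of the bounding function, and relies on the implicit fact that these roots decrease monotonically to $p^\star$ as $t\to\infty$.
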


\begin{proof}
    See Appendix~\ref{app:remaining_proofs}.
\end{proof}

One example where the conditions in Proposition~\ref{prop:generalQ} are satisfied follows.
\begin{example}
    For $s = \frac{Q(0)}{1-\delta}$, the condition above ($Q(0) \leq s+\delta (q_{mon}-s)$) is satisfied and ${p}^\star = \frac{p_{mon}q_{mon}}{s-\frac{\delta}{1-\delta}q_{mon}}$ which is larger than $p_{ser}$.
\end{example}

\section{Conclusion and Future Work}
Our work makes a step toward understanding how a (strategic) user can make more informed decisions about transaction fees by leveraging their belief about others' patience level, contributing to the balance between cost savings and timely inclusion. Specifically, we examined the minimum transaction fee (admission price) users must pay to guarantee that their transaction is eventually included.

We analyzed price fluctuations under a monopolistic pricing mechanism with \emph{quasi-patient} users, whose interest in transaction inclusion diminishes over time due to costs. Our model includes impatient users \cite{Lavi2022} and patient users \cite{Nisan} as special cases ($\delta = 0$ and $\delta = 1$). We provide general bounds on price dynamics for $\delta \in (0, 1)$, highlighting key factors affecting dynamics and the minimum admission price. 

We first discuss the tightness of our bounds. 
Though Theorem~\ref{thm1} is a generic upper bound, i.e., it applies to any demand function, it is  essentially tight as shown by the next result (whose proof is in Appendix~\ref{app:tighness}).

\begin{theorem}\label{thm:tightness-main}
	For any arbitrarily small $\xi>0$, there is a demand function for which (i) the minimum admission price is strictly smaller than $p_{mon}$ for all $\delta>\bar{\delta}_{ser}$, where $\bar{\delta}_{ser}$ is defined as in Theorem~\ref{thm1}, and (ii) for all $\delta < \bar{\delta}_{ser} - \xi$ the minimum admission price equals to $p_{mon}$. 
\end{theorem}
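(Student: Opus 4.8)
The plan is to exhibit a single parametric family of demand functions for which both claims can be read off, namely the class $\mathcal{Q}$ of Section~\ref{sec: Qepsilon}, with the slope parameter $\epsilon>0$ chosen small as a function of $\xi$. Recall from Example~\ref{ex:low-delta-Q-lb} that for $Q_\epsilon\in\mathcal{Q}$ and $s=1$ one has $p_{mon}=q_{mon}=\tfrac12$, $p_{ser}=\tfrac14$, $q_{ser}=\tfrac{1+\epsilon}{2}$, $\overline{p}_{ser}=\tfrac{1}{2(1+\epsilon)}$ and $\overline{\delta}_{ser}=1-\tfrac{\epsilon}{2}+\tfrac{\epsilon^2}{1+\epsilon}$. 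Part~(i) is essentially immediate: since $Q_\epsilon$ is strictly decreasing, Theorem~\ref{thm1}(\ref{itm:thm1:finite-delta}) applies for every $\delta>\overline{\delta}_{ser}$ and yields $p_{map}\le \TM{p_{ser}}{\delta}<\overline{p}_{ser}=\tfrac{1}{2(1+\epsilon)}<\tfrac12=p_{mon}$, which is the desired strict inequality. The whole content of the theorem therefore lies in Part~(ii): showing that the dynamics \emph{collapse} to $p_{mon}$ (i.e. $p_t=p_{mon}$ for all $t$, hence $p_{map}=p_{mon}$) for every $\delta$ up to a threshold that is within $\xi$ of $\overline{\delta}_{ser}$.

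For Part~(ii) I would run the collapse analysis behind Theorem~\ref{prop:Q-eps-fun}(\ref{item2}), but extract the \emph{exact} threshold. The key observation is that the worst case for maintaining $p_{mon}$ is the steady state: if $p_\tau=p_{mon}$ for all $\tau<t$, then on $p\le p_{mon}$ the total demand equals $D_t(p)=a_tQ_\epsilon(p)-b_t$ and, by \eqref{eq:at-bt-quantities}, increases monotonically in $t$ up to $D_\infty(p):=\frac{Q_\epsilon(p)-\delta q_{mon}}{1-\delta}$; indeed $D_{t+1}(p)-D_t(p)=\delta^t\bigl(Q_\epsilon(p)-q_{mon}\bigr)\ge 0$ for $p\le p_{mon}$. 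Hence it suffices to verify that $p_{mon}$ maximises the revenue $p\cdot\min(s,D_\infty(p))$: prices $p\ge p_{mon}$ are ruled out by the definition of $p_{mon}$ (there is no pent-up demand above $\min_\tau p_\tau=p_{mon}$), and for $p<p_{mon}$ the smaller demand $D_t\le D_\infty$ can only give smaller revenue, so an inductive argument forbids any deviation and pins $p_t=p_{mon}$ for all $t$.

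It then remains to compute when $p_{mon}$ is revenue-optimal against $D_\infty$. On $[0,\tfrac12]$ we have $Q_\epsilon(p)=\tfrac12+\epsilon-2\epsilon p$, so $p\cdot D_\infty(p)$ is a downward parabola whose unconstrained maximiser is $p^\ast=\tfrac{(1-\delta)/2+\epsilon}{4\epsilon}$. A short computation shows $p^\ast\ge \tfrac12$ exactly when $\delta\le 1-2\epsilon$; in that regime $p\cdot\min(s,D_\infty(p))$ is increasing on $[0,\tfrac12)$ and bounded above by its boundary value $\tfrac12\cdot D_\infty(\tfrac12)=\tfrac14=\texttt{REV}_{mon}$, so no price below $p_{mon}$ is profitable and the dynamics collapse, whereas for $\delta>1-2\epsilon$ the peak moves strictly inside $(0,\tfrac12)$ with value $>\tfrac14$ and the collapse breaks. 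Thus the collapse threshold is exactly $1-2\epsilon$. Since $\overline{\delta}_{ser}-(1-2\epsilon)=\tfrac{3\epsilon}{2}+\tfrac{\epsilon^2}{1+\epsilon}\to 0$ as $\epsilon\to 0$, choosing $\epsilon$ small enough that $\tfrac{3\epsilon}{2}+\tfrac{\epsilon^2}{1+\epsilon}\le\xi$ guarantees $\overline{\delta}_{ser}-\xi\le 1-2\epsilon$, whence $p_{map}=p_{mon}$ for all $\delta<\overline{\delta}_{ser}-\xi$, as required.

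The main obstacle is the collapse step of Part~(ii), and specifically the control of \emph{intermediate} prices: one must rule out not only the supply-capped price where $D_\infty=s$ but every $p\in(0,p_{mon})$, i.e. the interior peak of the revenue parabola. The clean dichotomy ``peak outside $(0,p_{mon})$ iff $\delta\le 1-2\epsilon$'' is what makes the threshold sharp, and verifying that this peak value genuinely exceeds $\texttt{REV}_{mon}$ once it enters the interval (so that the collapse really fails, rather than the sufficient condition merely ceasing to hold) is the delicate point. Once the threshold $1-2\epsilon$ is established, matching it to $\overline{\delta}_{ser}$ within the prescribed tolerance $\xi$ is routine bookkeeping in $\epsilon$.
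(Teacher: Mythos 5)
Your proposal is correct and follows essentially the same route as the paper: both use the family $Q_\epsilon\in\mathcal{Q}$, obtain part (i) from Theorem~\ref{thm1}, establish the collapse threshold $\delta\le 1-2\epsilon$ for part (ii), and close the gap $\overline{\delta}_{ser}-(1-2\epsilon)=\tfrac{3\epsilon}{2}+\tfrac{\epsilon^2}{1+\epsilon}<2\epsilon$ by taking $\epsilon$ of order $\xi/2$. The only cosmetic difference is that you prove the collapse by direct induction against the steady-state demand $D_\infty$, whereas the paper invokes the $F_t$-machinery of Theorem~\ref{th:lb-general} (via Theorem~\ref{prop:Q-eps-fun}); the underlying parabola/root computation is identical, and your extra discussion of the collapse genuinely failing above $1-2\epsilon$ is not needed for the statement.
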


Our work opens up interesting directions for future research. First, it might be interesting to consider restricted classes of demand functions of practical interest and provide tight bounds for \emph{these} functions. Note that Theorem~\ref{thm:tightness-main} may not apply, and our results suggest the need for some extension or variant of Theorems~\ref{thm1} and/or a stronger lower bound. Second, several extensions and variants are possible, leading to more complex dynamics. These include (i) various waiting cost functions, (ii) non-myopic monopolists maximizing revenue over multiple rounds, and (iii) strategic behavior of users and block producers. We believe some of our techniques can be adapted to handle more complex settings. In particular, dynamics where the level of patience correlates with the price could be analyzed by evaluating the counterpart of our pent-up ``growth'' term $a_t$ in \eqref{eq:at-bt-quantities} and relating it to some ``steepness'' condition of the demand curve, similar to the proof of Theorems~\ref{thm1} and \ref{th:lb-general}. Moreover, if the pent-up demand of certain complex dynamics can be ``sandwiched'' between two dynamics in our model, then our results could be turned into bounds for the dynamics of interest. We conjecture that more complex blockchain dynamics are qualitatively similar to ours and that this will help in obtaining useful parameter trade-offs and better design choices.

Finally, an open question remains regarding the comparison of social welfare in the case of quasi-patient users with the social welfare obtained in~\cite{Nisan}. On one hand, Theorem~\ref{thm:tightness-main} suggests that the dynamics collapses to the case $\delta=0$. On the other hand, it is unclear whether this occurs for demand functions where the social welfare for $\delta=0$ is arbitrarily bad relative to the optimum. Indeed, the proof of Theorem~\ref{thm:tightness-main}, uses demand curves with a ``flat region'' for which the social welfare for $\delta=0$ is within a constant factor of that for $\delta=1$ (the latter is the patient case in \cite{Nisan} for which the social welfare is at least $1/2$ of the optimum).

\begin{credits}
    \subsubsection{\ackname} We thank an anonymous reviewer for highlighting an issue in the proof of Theorem~6 in an earlier version of this work related to social welfare. While this observation raises an interesting point, the validity of the result remains an open question, as discussed above.
\end{credits}
\bibliographystyle{splncs04}
\bibliography{bibliography}

\newpage
\appendix
\section{Postponed Proofs}\label{app}
Section~\ref{app:thm1} provides the proof of Theorem~\ref{thm1} and Section~\ref{app:remaining_proofs} provides the proofs of the remaining results.

\subsection{Proof of Theorem~\ref{thm1}}\label{app:thm1}

This section provides lemmas (and its proofs) that will be used to prove Theorem~\ref{thm1}. 

\subsubsection{High level intuition.} Before presenting the proof in full detail, we first give a high level idea of the proof and how it differs from the one in \cite{Nisan} for the case $\delta = 1$. 
\begin{enumerate}
    \item In the quasi-patient users setting ($\delta < 1$), the ``accumulated'' pent-up demand is \emph{bounded} and cannot grow indefinitely as $t \rightarrow \infty$ (see Equations~\eqref{eq:D-rewritten} and \eqref{eq:at-bt-quantities}).
    \item Only for ``sufficiently large'' $\delta \in (0,1)$ does the ``accumulated'' pent-up demand suffice for a candidate price $p^* > p_{ser}$ to be better than the monopolist price $p_{mon}$ (otherwise the dynamics will not choose this price $p^*$). Similarly, for a given $\delta$, only ``sufficiently large'' prices $p^* > \TM{p_{ser}}{\delta}$ can be better than the monopolist price.
    \item The precise relation between the ``feasible'' $\delta$ and $p^*$ is given by the quantities in Definition~\ref{def:adapted-quantities}. The main technical contribution is to show that for such pairs (i) the dynamics must eventually take a price smaller than $p^*$ and (ii) the dynamics cannot ``rest'' on some $p<p_{mon}$ but must pass infinitely often through  $p_{mon}$ and followed by a sequence of decreasing prices repeating (i).
\end{enumerate}

\subsubsection{Actual proof in full detail.}
Throughout this section we use the quantity $a_t$ defined in Equation~\eqref{eq:at-bt-quantities}, which we rewrite here for convenience:
\begin{align}\label{eq:at-quantities-app}
    a_t = 1 + \delta+\cdots + \delta^{t-1} = \frac{1 - \delta^t}{1 - \delta} \ ,  && \delta \in [0,1)\ .
\end{align}
Note that some of the lemmas below follow the style of \cite{Nisan}, but are adapted to our setting to account for the impact of $\delta$ in the dynamics.

\begin{lemma}\label{le:Nisan-3}
For $p<p'$ and any $t$:
    $$D_t(p)-D_t(p') \leq a_t \cdot  (Q(p)-Q(p'))$$
    where $a_t$ is defined as in \eqref{eq:at-quantities-app}.
\end{lemma}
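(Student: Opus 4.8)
The plan is to prove the inequality by induction on $t$, working directly from the recursive definitions \eqref{eq:Z-def} and \eqref{eq:D-def}; this keeps the argument uniform and avoids unpacking the closed form \eqref{eq:D-rewritten} together with the full price history. Throughout I use only that $Q$ is decreasing (so $Q(p) - Q(p') \ge 0$ whenever $p < p'$) and the identity $a_t = 1 + \delta a_{t-1}$, which follows at once from \eqref{eq:at-quantities-app}.

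The base case $t = 1$ is immediate: $D_1 = Q$ and $a_1 = 1$, so the two sides coincide. For the inductive step fix $p < p'$ and use $D_t = \delta Z_{t-1} + Q$ to split
\[
D_t(p) - D_t(p') = \delta\bigl(Z_{t-1}(p) - Z_{t-1}(p')\bigr) + \bigl(Q(p) - Q(p')\bigr).
\]
It therefore suffices to establish $Z_{t-1}(p) - Z_{t-1}(p') \le a_{t-1}\bigl(Q(p) - Q(p')\bigr)$: plugging this in bounds the right-hand side by $(1 + \delta a_{t-1})\bigl(Q(p) - Q(p')\bigr) = a_t\bigl(Q(p) - Q(p')\bigr)$, as desired.

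To bound the increment of $Z_{t-1}$ I would distinguish three exhaustive cases according to the position of $p, p'$ relative to $p_{t-1}$. If $p' \le p_{t-1}$ (both points at or below $p_{t-1}$), the additive constant $-q_{t-1}$ cancels and $Z_{t-1}(p) - Z_{t-1}(p') = D_{t-1}(p) - D_{t-1}(p')$, to which the inductive hypothesis applies verbatim. If $p > p_{t-1}$ (both points above), both values of $Z_{t-1}$ vanish and the difference is $0$, which is at most the non-negative right-hand side. The crux is the straddling case $p \le p_{t-1} < p'$, where $Z_{t-1}(p') = 0$ and $Z_{t-1}(p) = D_{t-1}(p) - q_{t-1}$. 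The key step is to recall $q_{t-1} = D_{t-1}(p_{t-1})$, so that $Z_{t-1}(p) - Z_{t-1}(p') = D_{t-1}(p) - D_{t-1}(p_{t-1})$; the inductive hypothesis on the pair $(p, p_{t-1})$ bounds this by $a_{t-1}\bigl(Q(p) - Q(p_{t-1})\bigr)$, and since $p_{t-1} < p'$ and $Q$ is decreasing we have $Q(p_{t-1}) \ge Q(p')$, giving the required bound $a_{t-1}\bigl(Q(p) - Q(p')\bigr)$.

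I expect the straddling case to be the only genuine obstacle, and the one idea it needs is to re-express $q_{t-1}$ as $D_{t-1}(p_{t-1})$ so that the inductive hypothesis becomes applicable; the remaining cases are cancellations, and the whole induction is closed by the telescoping identity $a_t = 1 + \delta a_{t-1}$.
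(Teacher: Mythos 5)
Your proof is correct and follows essentially the same route as the paper's: induction on $t$ via the decomposition $D_t = \delta Z_{t-1} + Q$, with the same three-way case analysis on the position of $p,p'$ relative to $p_{t-1}$. The only cosmetic difference is in the straddling case, where you apply the inductive hypothesis to the pair $(p, p_{t-1})$ and then use monotonicity of $Q$, whereas the paper first establishes $Z_{t-1}(p)-Z_{t-1}(p') \leq D_{t-1}(p)-D_{t-1}(p')$ (using $D_{t-1}(p')\leq q_{t-1}$) and applies the hypothesis to $(p,p')$ --- both close the induction identically.
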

\begin{proof}
Fix any $t \geq 1$. We first show that the following holds:
    \begin{equation}\label{ineq1}
        Z_{t-1}(p)-Z_{t-1}(p') \leq D_{t-1}(p)- D_{t-1}(p') \ . 
    \end{equation} We distinguish the following cases:
    \begin{enumerate}
    \item ($p_{t-1}<p<p'$.) We have that $Z_{t-1}(p)=Z_{t-1}(p')=0$ and hence $D_t(p)-D_t(p') = [Q(p)-Q(p')]$
    \item ($p<p'<p_{t-1}$.) We have that $Z_{t-1}(p) = D_{t-1}(p) - q_{t-1}$ and $Z_{t-1}(p') = D_{t-1}(p') - q_{t-1}$. Together, $Z_{t-1}(p)-Z_{t-1}(p') = D_{t-1}(p)-D_{t-1}(p')$ and hence $D_t(p)-D_t(p') = \delta(D_{t-1}(p)-D_{t-1}(p')) + [Q(p)-Q(p')]$. 
    \item ($p<p_{t-1}<p'$.) We have that $Z_{t-1}(p) = D_{t-1}(p) - q_{t-1}$ and $Z_{t-1}(p') =0$. Also note that $D_{t-1}(p')\leq q_{t-1}= D_{t-1}(p_{t-1})$. Hence,
    $Z_{t-1}(p)-Z_{t-1}(p') = D_{t-1}(p) - q_{t-1} \leq D_{t-1}(p)- D_{t-1}(p')$. And therefore we have that, 
    \begin{equation}\label{ineq:D_t}
        D_t(p)-D_t(p') \leq \delta (D_{t-1}(p)- D_{t-1}(p')) + [Q(p)-Q(p')]
    \end{equation}
\end{enumerate}
Note that the inequality \eqref{ineq1} holds in all cases.
We next prove the lemma  by induction on $t$.
\begin{itemize}
    \item \emph{Base case} ($t=1$): $D_1(p)-D_1(p') = Q(p)- Q(p') \leq 1 [Q(p) - Q(p')]$
    \item \emph{Inductive step}: Assume the claim is true for $t-1$. 
    Then we have 
    \begin{align*}
        D_t(p)-D_t(p') &= \delta (Z_{t-1}(p)-Z_{t-1}(p')) + [Q(p) - Q(p')] \\ &\leq \delta (D_{t-1}(p)- D_{t-1}(p')) + [Q(p) - Q(p')]\\
        &\leq \delta ((1+ \delta + \delta^2 +\cdots+\delta^{t-2}) [Q(p)-Q(p')]) + [Q(p) - Q(p')]\\
        &= (1+ \delta + \delta^2 +\cdots+\delta^{t-1})[Q(p) - Q(p')]
    \end{align*}
    where we used Equation \eqref{ineq1} in the second line, and the second inequality holds by the three cases analysis above.
\end{itemize}
\end{proof}

\begin{lemma}
    For $p<p'$ and for all $T$ and $t>T$:
\begin{align}
    D_t(p)-D_t(p') \leq  a_{t - T} \cdot (Q(p) - Q(p')) + \delta^{t-T} \cdot (Z_{T}(p)-Z_{T}(p'))\ , 
\end{align}
where $a_t$ is defined as in \eqref{eq:at-quantities-app}.
\end{lemma}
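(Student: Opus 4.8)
The plan is to prove this as a direct generalization of Lemma~\ref{le:Nisan-3}, by induction on $t$ for a fixed reference time $T$, starting the ``accumulation'' at time $T$ instead of at $0$. Write $\Delta_t := D_t(p)-D_t(p')$ and $\zeta_t := Z_t(p)-Z_t(p')$ for brevity. From the definition of total demand \eqref{eq:D-def}, for every $t\geq 1$ one has the exact identity $\Delta_t = \delta\,\zeta_{t-1} + (Q(p)-Q(p'))$; combining it with inequality \eqref{ineq1}, namely $\zeta_{t-1}\leq \Delta_{t-1}$ (shown to hold in all three cases in the proof of Lemma~\ref{le:Nisan-3}), gives the one-step bound
\begin{align}\label{eq:one-step-plan}
    \Delta_t \leq \delta\,\Delta_{t-1} + (Q(p)-Q(p')) \ .
\end{align}

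Next I would fix $T$ and induct on $t>T$. For the base case $t=T+1$, the exact identity gives $\Delta_{T+1} = \delta\,\zeta_T + (Q(p)-Q(p'))$, which already matches the claimed right-hand side since $a_1=1$ and $\delta^1=\delta$. For the inductive step, assuming $\Delta_{t-1} \leq a_{t-1-T}(Q(p)-Q(p')) + \delta^{t-1-T}\zeta_T$, I would substitute into \eqref{eq:one-step-plan}:
\begin{align}
    \Delta_t &\leq \delta\big(a_{t-1-T}(Q(p)-Q(p')) + \delta^{t-1-T}\zeta_T\big) + (Q(p)-Q(p')) \notag \\
    &= (\delta\,a_{t-1-T}+1)\,(Q(p)-Q(p')) + \delta^{t-T}\zeta_T \ .
\end{align}
The coefficient then collapses via the defining recursion of the geometric sum, $\delta\,a_{k-1}+1=a_k$ (immediate from $a_k=\frac{1-\delta^k}{1-\delta}$), applied with $k=t-T$, which closes the induction. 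Note that all inequality directions are preserved because $\delta>0$ and $Q(p)-Q(p')\geq 0$.

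I expect no genuine obstacle here: beyond reusing the one-step recursion from Lemma~\ref{le:Nisan-3} verbatim, the only substantive content is the elementary identity $\delta\,a_{k-1}+1=a_k$. The value of the general-$T$ version is conceptual rather than technical: it lets one ``reset'' the pent-up-demand accumulation at an arbitrary time $T$ (for instance right after the dynamics has returned to $p_{mon}$), which is precisely what the infinite-recurrence part of the argument for Theorem~\ref{thm1} needs, whereas the cold-start case $T=0$ covered by Lemma~\ref{le:Nisan-3} is recovered here by setting $\zeta_0\equiv 0$.
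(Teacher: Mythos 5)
Your proof is correct and follows essentially the same route as the paper: the one-step recursion $D_t(p)-D_t(p') = \delta\,(Z_{t-1}(p)-Z_{t-1}(p')) + (Q(p)-Q(p'))$ combined with inequality \eqref{ineq1} from Lemma~\ref{le:Nisan-3}, then induction with the geometric-sum identity. If anything, your version is slightly cleaner than the paper's, since you correctly anchor the base case at $t=T+1$ (where the exact identity gives the claim with $a_1=1$), whereas the paper states its base case as $t=1$, which strictly speaking only covers $T=0$.
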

\begin{proof}
    Similar to the proof above. By repeating Equation \eqref{ineq:D_t}, $(t-T-1)$-times and in another step we just replace $D_{T}(p)- D_{T}(p')$ with its definition, we end up with the expression:
\begin{equation}
    D_t(p)-D_t(p') \leq \delta^{t-T} (Z_{T}(p)-Z_{T}(p')) + (1 + \delta + \cdots + \delta^{t-T-1}) [Q(p) - Q(p')].
\end{equation}
By induction, it holds for all $t$:
\begin{itemize}
        \item $t=1$: $D_{1}(p)-D_{1}(p') = \delta (Z_{0}(p) - Z_{0}(p')) + [Q(p)- Q(p')] = [Q(p)- Q(p')]$
        \item Assume the claim is true for $t-1$
        \item $t-1 \rightarrow t$: 
            \begin{align*}
                &D_t(p)-D_t(p') \\&= \delta (Z_{t-1}(p)-Z_{t-1}(p')) + [Q(p) - Q(p')] \\ &\leq \delta (D_{t-1}(p)- D_{t-1}(p')) + [Q(p) - Q(p')]\\
                &\leq \delta (\delta^{t-T-1} (Z_{T}(p)-Z_{T}(p')) + (1 + \delta + \cdots + \delta^{t-T-2}) [Q(p) - Q(p')]) + [Q(p) - Q(p')]\\
                &= \delta^{t-T} (Z_{T}(p)-Z_{T}(p')) + (1 + \delta + \cdots + \delta^{t-T-1}) [Q(p) - Q(p')].
            \end{align*}
\end{itemize}
\end{proof}

\begin{lemma}\label{le:Nisan-4c}
    For all $T$ and $t>T$, if for all $t'$ such that $T<t'<t$ we also have that $p_{t'} \geq p'>p$ then in fact we have equality
\begin{equation}
    D_t(p)-D_t(p') =  a_{t-T} \cdot (Q(p) - Q(p')) + \delta^{t-T} (Z_{T}(p)-Z_{T}(p'))
\end{equation}
where $a_t$ is defined as in \eqref{eq:at-quantities-app}.
\end{lemma}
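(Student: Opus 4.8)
The plan is to re-run the exact recursion from the proof of the preceding (inequality) lemma, but to track precisely where the inequality $\leq$ was introduced and argue that it degenerates to an equality under the extra hypothesis. Recall that the identity $D_t(p) - D_t(p') = \delta\,(Z_{t-1}(p) - Z_{t-1}(p')) + (Q(p) - Q(p'))$ is \emph{exact}, following directly from \eqref{eq:D-def}. The only step that loses equality is the passage from $Z_{t'}(p) - Z_{t'}(p')$ to $D_{t'}(p) - D_{t'}(p')$, which is case~3 in the proof of Lemma~\ref{le:Nisan-3} (the case $p \leq p_{t'} < p'$, where $Z_{t'}(p') = 0$ but $D_{t'}(p')$ may be positive, forcing $Z_{t'}(p) - Z_{t'}(p') = D_{t'}(p) - q_{t'} \leq D_{t'}(p) - D_{t'}(p')$).

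First I would observe that the hypothesis $p_{t'} \geq p' > p$ for every $T < t' < t$ places us into the regime where both $p, p' \leq p_{t'}$ (case~2 of Lemma~\ref{le:Nisan-3}) at each such step. By definition \eqref{eq:Z-def} this gives $Z_{t'}(p) = D_{t'}(p) - q_{t'}$ and $Z_{t'}(p') = D_{t'}(p') - q_{t'}$, so the common term $q_{t'}$ cancels and $Z_{t'}(p) - Z_{t'}(p') = D_{t'}(p) - D_{t'}(p')$ holds with \emph{equality}. Hence no inequality is ever incurred along the unrolling.

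Then I would unroll the exact recursion downward from step $t$ to step $T$: applying the identity together with the cancellation above repeatedly for $t' = t-1, t-2, \ldots, T+1$ yields
\begin{equation}
    D_t(p) - D_t(p') = \delta^{t-T}\,(Z_T(p) - Z_T(p')) + (1 + \delta + \cdots + \delta^{t-T-1})\,(Q(p) - Q(p')),
\end{equation}
and recognizing $1 + \delta + \cdots + \delta^{t-T-1} = a_{t-T}$ from \eqref{eq:at-quantities-app} gives the claim. Equivalently, one can phrase this as an induction on $t - T$ that mirrors the previous lemma verbatim but with every $\leq$ replaced by $=$, the base case $t = T+1$ following directly from the exact identity and the terminal $Z_T$ term.

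The point requiring care — rather than a genuine obstacle — is the index bookkeeping: the hypothesis is imposed only for $T < t' < t$, so it applies precisely to the steps $Z_{t-1}, \ldots, Z_{T+1}$ that the recursion unwinds, while the terminal term $Z_T(p) - Z_T(p')$ is kept intact because no assumption is placed on $p_T$. This is exactly why $Z_T$, and not $D_T$, appears in the statement, and why the geometric factor is $a_{t-T}$ rather than $a_t$.
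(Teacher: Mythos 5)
Your proof is correct and follows essentially the same route as the paper's: both use the exact recursion $D_t(p)-D_t(p') = \delta\,(Z_{t-1}(p)-Z_{t-1}(p')) + (Q(p)-Q(p'))$, observe that the hypothesis $p_{t'}\geq p'>p$ makes the $q_{t'}$ terms cancel so that $Z_{t'}(p)-Z_{t'}(p') = D_{t'}(p)-D_{t'}(p')$ holds with equality, and then unroll by induction down to the terminal $Z_T$ term. Your explicit remark on the index bookkeeping (why $Z_T$ survives and the factor is $a_{t-T}$) is a clean articulation of what the paper's induction does implicitly, including stating the base case as $t=T+1$ rather than the paper's slightly imprecise $t=1$.
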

\begin{proof}
    For $p_{t-1}\geq p' > p$ we have that $Z_{t-1}(p) = D_{t-1}(p) - q_{t-1}$ and $Z_{t-1}(p') = D_{t-1}(p') - q_{t-1}$. Therefore, 
    $D_{t}(p)-D_{t}(p') = \delta (Z_{t-1}(p) - Z_{t-1}(p')) + [Q(p)- Q(p')]= \delta (D_{t-1}(p) - D_{t-1}(p')) + [Q(p)- Q(p')]$. By induction the claim holds.
    \begin{itemize}
        \item $t=1$: $D_{1}(p)-D_{1}(p') = \delta (Z_{0}(p) - Z_{0}(p')) + [Q(p)- Q(p')] = [Q(p)- Q(p')]$
        \item Assume the claim is true for $t-1$
        \item $t-1 \rightarrow t$: 
            \begin{align*}
                &D_t(p)-D_t(p') \\&= \delta (Z_{t-1}(p)-Z_{t-1}(p')) + [Q(p) - Q(p')] \\ &= \delta (D_{t-1}(p)- D_{t-1}(p')) + [Q(p) - Q(p')]\\
                &= \delta ((1 + \delta + \cdots + \delta^{t-T-2}) [Q(p) - Q(p')] + \delta^{t-T-1} (Z_{T}(p)-Z_{T}(p'))) + [Q(p) - Q(p')]\\
                &= (1 + \delta + \cdots + \delta^{t-T-1}) [Q(p) - Q(p')] + \delta^{t-T} (Z_{T}(p)-Z_{T}(p')).
            \end{align*}
    \end{itemize}
\end{proof}

\begin{lemma}\label{le:lemma4}
    For all $T$ such that $p_T\leq p<p'$ (or $T=0$) and all $t>T$, we have
\begin{equation}
    D_t(p)-D_t(p') \leq  a_{t-T} \cdot (Q(p) - Q(p'))
\end{equation}
where $a_t$ is defined as in \eqref{eq:at-quantities-app}.
Furthermore, for all $T$ such that $p_T\leq p<p'$ (or $T=0$), if for all $t'$ such that $T<t'<t$ we also have that $p_{t'}\geq p' > p $, then the equation above holds with equality, i.e. 
\begin{equation}
    D_t(p)-D_t(p') =  a_{t-T} \cdot (Q(p) - Q(p')) \ . 
\end{equation}
\end{lemma}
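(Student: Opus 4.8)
The plan is to derive this statement as a direct corollary of the two preceding lemmas, by observing that the hypothesis $p_T \leq p < p'$ (or $T=0$) forces the pent-up demand values $Z_T(p)$ and $Z_T(p')$ to vanish, thereby killing the correction term $\delta^{t-T}(Z_T(p)-Z_T(p'))$ that appears in both the inequality of the second lemma above and the identity of Lemma~\ref{le:Nisan-4c}.

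First I would establish the key observation: under the stated hypothesis, $Z_T(p) = Z_T(p') = 0$. For $T=0$ this is immediate from the initial condition $Z_0 \equiv 0$. For $T \geq 1$, note that $p_T \leq p$ gives $p \geq p_T$, and $p' > p \geq p_T$ gives $p' > p_T$; by the definition of pent-up demand in \eqref{eq:Z-def}, $Z_T$ vanishes for any argument strictly above $p_T$, so $Z_T(p')=0$. At the boundary we have $Z_T(p_T) = D_T(p_T) - q_T = 0$, since $q_T = D_T(p_T)$; hence $Z_T(p)=0$ as well, whether $p>p_T$ or $p=p_T$. Consequently $Z_T(p)-Z_T(p') = 0$.

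With this in hand, the inequality follows by substituting $Z_T(p)-Z_T(p')=0$ into the bound of the second lemma above, which for $p<p'$ and $t>T$ states $D_t(p)-D_t(p') \leq a_{t-T}(Q(p)-Q(p')) + \delta^{t-T}(Z_T(p)-Z_T(p'))$: the correction term drops out and leaves the claimed $a_{t-T}(Q(p)-Q(p'))$. For the equality statement, I would invoke Lemma~\ref{le:Nisan-4c}: under the additional hypothesis $p_{t'} \geq p' > p$ for all $T < t' < t$, that lemma yields the corresponding identity carrying the same term $\delta^{t-T}(Z_T(p)-Z_T(p'))$, which again vanishes, so that $D_t(p)-D_t(p') = a_{t-T}(Q(p)-Q(p'))$.

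I do not anticipate a substantive obstacle, as the content is essentially bookkeeping once the vanishing of $Z_T$ is recognized. The only point requiring care is the boundary case $p=p_T$, where the two-branch definition of $Z_T$ must be checked to still give $0$ rather than relying on the ``$p>p_T$'' branch alone; this is handled by the identity $Z_T(p_T)=D_T(p_T)-q_T=0$ noted above. Should a self-contained argument be preferred over invoking the earlier lemmas, the same conclusion can be reached by a direct induction on $t$ mirroring those proofs, carrying the hypothesis $Z_T(p)=Z_T(p')=0$ through the recursion $D_t(p)-D_t(p') = \delta(Z_{t-1}(p)-Z_{t-1}(p')) + (Q(p)-Q(p'))$.
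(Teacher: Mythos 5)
Your proposal is correct and follows exactly the paper's own proof: the paper likewise derives the lemma by applying the two preceding lemmas and noting that $p_T \leq p < p'$ (or $T=0$) forces $Z_T(p) = Z_T(p') = 0$, so the correction term $\delta^{t-T}(Z_T(p)-Z_T(p'))$ vanishes. Your explicit check of the boundary case $p = p_T$ via $Z_T(p_T) = D_T(p_T) - q_T = 0$ is a welcome bit of extra care that the paper leaves implicit.
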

\begin{proof}
Apply the previous two lemmas and note that $Z_T(p) = Z_T(p')=0$ since $p_T\leq p<p'$.
\end{proof}

The proofs of Lemmas~\ref{lemma:lowerbound} and \ref{lem: upperbound} below are the same as in \cite{Nisan}, and we restate them here for the sake of completeness. 

\begin{lemma}\label{lemma:lowerbound}
For every $t$ it holds that $p_t\geq p_{ser}$.
\end{lemma}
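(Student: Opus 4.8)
The plan is to exploit two facts that are already available in the excerpt: first, the remark following the definition of $p_{ser}$, which records that the monopolist can always fall back on price $p_{mon}$ and thereby guarantees $\texttt{REV}_t = p_t\cdot q_t \geq \texttt{REV}_{mon} = p_{mon}q_{mon}$ at every step $t$; and second, the hard supply cap $s$ on the number of transactions that can be included in a block. The whole statement then follows from squeezing the revenue $\texttt{REV}_t$ between these two quantities. I would first recall \emph{why} the fall-back works: since $D_t(p)\geq Q(p)$ for all $p$ and $\min(s,\cdot)$ is non-decreasing, choosing $p_{mon}$ at step $t$ yields revenue $p_{mon}\cdot\min(s,D_t(p_{mon})) \geq p_{mon}\cdot\min(s,q_{mon}) = p_{mon}q_{mon}$, where the last equality uses $q_{mon}\leq s$ (which holds at the monopolist price, since otherwise one could raise the price while still filling the block and strictly increase revenue). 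As $p_t$ is revenue-maximizing, this gives $\texttt{REV}_t \geq p_{mon}q_{mon}$.

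Next I would bound $\texttt{REV}_t$ from above using the supply constraint. The quantity actually sold never exceeds $s$, so $\texttt{REV}_t = p_t\cdot \min(s,D_t(p_t)) \leq p_t\cdot s$. Combining this with the lower bound from the previous step and with the identity $p_{ser}\cdot s = p_{mon}q_{mon}$, which is immediate from the definition $p_{ser} = p_{mon}q_{mon}/s$, yields the chain
\[
p_t\cdot s \;\geq\; \texttt{REV}_t \;\geq\; p_{mon}q_{mon} \;=\; p_{ser}\cdot s .
\]
Dividing through by $s>0$ gives $p_t\geq p_{ser}$, which is exactly the claim.

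I do not expect any genuine obstacle here: the argument is a one-line two-sided squeeze, and it is the direct analogue of the corresponding step in \cite{Nisan} (which is why the authors state that Lemma~\ref{lemma:lowerbound} carries over unchanged). The only points that need a careful word are the identification of the supplied quantity with $\min(s,D_t(p_t))$ at the optimum and the fact that $q_{mon}\leq s$, so that $\texttt{REV}_{mon}$ really equals $p_{mon}q_{mon}$ rather than the capped value $p_{mon}s$; both follow from the optimizations defining $p_t$ and $p_{mon}$. Notably, this lower bound is uniform in $t$ and does not use the decay parameter $\delta$ at all, since it relies only on monotonicity of the total demand relative to $Q$ and on the supply cap $s$.
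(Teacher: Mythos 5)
Your proof is correct and is essentially the paper's own argument: both rest on the squeeze $p_t\cdot s \geq \texttt{REV}_t \geq \texttt{REV}_{mon} = p_{mon}q_{mon} = p_{ser}\cdot s$, with the monopolist price as the always-available fallback. The paper merely states it contrapositively (any $p < p_{ser}$ yields revenue $p\cdot s < p_{ser}\cdot s$ and so is never chosen), while you additionally spell out the minor points ($q_{mon}\leq s$, $q_t=\min(s,D_t(p_t))$ at the optimum) that the paper leaves implicit.
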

\begin{proof}
    The maximum revenue that is achievable from a price $p$ is $p\cdot s$. For $p < p_{ser}$ we have that $p\cdot s< p_{ser}\cdot s=p_{mon} \cdot q_{mon}$, and the latter revenue can be achieved at any step using the monopolist price.
\end{proof}

\begin{lemma}\label{lem: upperbound}
    For every $t$ either $p_t=p_{mon}$ or $p_t < p_{t-1}$.
\end{lemma}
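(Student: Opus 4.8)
The plan is to prove the contrapositive: assuming $p_t \geq p_{t-1}$ (i.e.\ the price does not strictly decrease), I would show that necessarily $p_t = p_{mon}$. The whole argument rests on comparing the revenue $\texttt{REV}_t(p):=p\cdot\min(s,D_t(p))$ that the time-$t$ monopolist can extract against the benchmark $\texttt{REV}_{mon}$ obtainable at the monopolist price, so that a squeeze between a matching upper and lower bound pins down $p_t$.

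First I would pin down the shape of $D_t$ above the previous price. Since $Z_{t-1}(p)=0$ for $p>p_{t-1}$ and, at the boundary, $Z_{t-1}(p_{t-1})=D_{t-1}(p_{t-1})-q_{t-1}=0$ (because $q_{t-1}=D_{t-1}(p_{t-1})$), Equation~\eqref{eq:D-def} gives $D_t(p)=Q(p)$ for every $p\geq p_{t-1}$. Consequently, on this whole range the revenue collapses to the daily-demand revenue, $\texttt{REV}_t(p)=p\cdot\min(s,Q(p))\leq \texttt{REV}_{mon}$, where the inequality is just the defining maximality of $p_{mon}$ for the demand $Q$.

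Next I would supply the matching lower bound. Because the pent-up demand is nonnegative, $D_t(p)\geq Q(p)$ for all $p$ (the Remark following the definition of $p_{mon}$), so $\texttt{REV}_t(p_t)\geq \texttt{REV}_t(p_{mon})\geq \texttt{REV}_{mon}$. Combining this with the upper bound from the previous step, which applies precisely because we assumed $p_t\geq p_{t-1}$, forces $\texttt{REV}_t(p_t)=\texttt{REV}_{mon}$ and moreover $\texttt{REV}_t(p_t)=p_t\cdot\min(s,Q(p_t))$. Thus $p_t$ is itself a revenue-maximizing price for the bare demand $Q$, whence $p_t=p_{mon}$, completing the contrapositive (and the case $t=1$ is immediate, as $D_1=Q$ gives $p_1=p_{mon}$).

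The only delicate points are the boundary evaluation $D_t(p_{t-1})=Q(p_{t-1})$ (so that $\{p\geq p_{t-1}\}$ is genuinely the ``no pent-up demand'' region, not merely $\{p>p_{t-1}\}$) and the final identification $p_t=p_{mon}$ from equality of revenues, which implicitly uses that $p_{mon}$ is the \emph{unique} maximizer of $p\mapsto p\cdot\min(s,Q(p))$ (or a fixed tie-breaking rule shared across all steps). Strict monotonicity of $Q$ together with the $\arg\max$ convention is what I would invoke here. I expect this uniqueness/tie-breaking bookkeeping to be the main thing to get exactly right, since the revenue squeeze itself is immediate once $D_t$ is identified with $Q$ above $p_{t-1}$.
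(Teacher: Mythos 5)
Your proof is correct and follows essentially the same route as the paper's: identify $D_t(p)=Q(p)$ on the region $p\geq p_{t-1}$ (including the boundary, since $Z_{t-1}(p_{t-1})=0$), observe that revenue there cannot exceed $\texttt{REV}_{mon}$, and conclude via the consistent tie-breaking convention that the only way $p_t$ lies in that region is $p_t=p_{mon}$. The paper's version is just a terser statement of the same squeeze, and it likewise flags the tie-breaking assumption you identified as the delicate point.
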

\begin{proof} 
    For $p \geq p_{t-1}$ we have that $D_t(p) = Q(p)$ so the maximal revenue obtained by possible $p \geq p_{t-1}$ is exactly the monopolist’s revenue that is obtained at $p_t = p_{mon}$ (we assume that ties in maximum revenue are broken consistently). So, unless $p_t = p_{mon}$, we must obtain the maximum revenue for some $p < p_{t-1}$.
\end{proof}

The next lemma provides a sufficient condition for the price to decrease.

\begin{lemma}\label{le:Nisan9}
    Assume that for some $p>p_{ser}$ we have that $D_t(p)\geq s$, then
    \begin{itemize}
        \item [(i)] $p_t<p_{t-1}$, and
        \item [(ii)] $Q(p_t)-Q(p_{t-1}) \geq (a_{t-1})^{-1} s \frac{(p-p_{ser})}{p_{mon}} $
    \end{itemize}
    where $a_t$ is defined as in \eqref{eq:at-quantities-app}.
    Furthermore, if for some $T < t$ we had $p_T \leq p_t$ then, 
    $Q(p_t)-Q(p_{t-1}) \geq (a_{t-1-T})^{-1} s \frac{(p-p_{ser})}{p_{mon}}$.
\end{lemma}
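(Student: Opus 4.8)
The plan is to prove the two parts separately, both anchored on the fact (used already in Lemmas~\ref{lemma:lowerbound} and~\ref{lem: upperbound}) that at every step the monopolist can fall back on $p_{mon}$ and collect $\texttt{REV}_{mon}=p_{mon}q_{mon}=p_{ser}\cdot s$, so $\texttt{REV}_t\ge\texttt{REV}_{mon}$ always.

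For part (i), I would first note that the hypothesis $D_t(p)\ge s$ lets the monopolist collect $p\cdot\min(s,D_t(p))=p\cdot s$ at price $p$; since $p>p_{ser}$ this strictly exceeds $\texttt{REV}_{mon}=p_{ser}\cdot s$, hence $\texttt{REV}_t\ge p\cdot s>\texttt{REV}_{mon}$. If instead $p_t\ge p_{t-1}$, then by the argument of Lemma~\ref{lem: upperbound} we have $D_t(p_t)=Q(p_t)$, so $\texttt{REV}_t=p_t\min(s,Q(p_t))\le\max_{p'}p'\min(s,Q(p'))=\texttt{REV}_{mon}$, a contradiction. Therefore $p_t<p_{t-1}$.

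For part (ii), I would bound the ``demand gap'' $q_t-Q(p_{t-1})$ from below by a revenue comparison, and then convert it into a bound on $Q(p_t)-Q(p_{t-1})$ via the demand-difference lemma. At the maximizer $q_t=D_t(p_t)\le s$ (otherwise raising the price would increase revenue), so $\texttt{REV}_t=p_t q_t\ge p\cdot s$ gives $q_t\ge ps/p_t\ge ps/p_{t-1}$ using $p_t<p_{t-1}$. On the other side, $Q(p_{t-1})\le D_{t-1}(p_{t-1})=q_{t-1}\le s$, so the daily revenue $p_{t-1}Q(p_{t-1})=p_{t-1}\min(s,Q(p_{t-1}))$ cannot exceed $\texttt{REV}_{mon}=p_{ser}\cdot s$, whence $Q(p_{t-1})\le p_{ser}s/p_{t-1}$. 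Subtracting and using $p_{t-1}\le p_{mon}$ yields $q_t-Q(p_{t-1})\ge s(p-p_{ser})/p_{t-1}\ge s(p-p_{ser})/p_{mon}$. The key structural observation is that $Z_{t-1}(p_{t-1})=D_{t-1}(p_{t-1})-q_{t-1}=0$, so $D_t(p_{t-1})=Q(p_{t-1})$ and the demand gap is exactly $D_t(p_t)-D_t(p_{t-1})$. Applying Lemma~\ref{le:lemma4} with a reset time $T$ satisfying $p_T\le p_t$ (which forces $Z_T(p)=0$ for $p\le p_t<p_{t-1}$, so the $Z_T$ terms vanish) relates $D_t(p_t)-D_t(p_{t-1})$ to $Q(p_t)-Q(p_{t-1})$ through the geometric coefficient $a$, and dividing through gives the stated lower bound; the base claim (ii) is the case $T=0$, valid because $Z_0\equiv0$.

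The main obstacle I anticipate is pinning down the precise index of $a$. The direct use of Lemma~\ref{le:lemma4} bounds $D_t(p_t)-D_t(p_{t-1})\le a_{t-T}\,(Q(p_t)-Q(p_{t-1}))$, which only produces the coefficient $a_{t-T}$, whereas the statement claims the sharper $a_{t-1-T}$. To recover the extra factor I would peel off the last round explicitly, writing $D_t(p_t)-D_t(p_{t-1})=\delta\,(D_{t-1}(p_t)-D_{t-1}(p_{t-1}))+(Q(p_t)-Q(p_{t-1}))$, bounding the time-$(t-1)$ gap by $a_{t-1-T}\,(Q(p_t)-Q(p_{t-1}))$, and keeping the residual daily term $Q(p_t)-Q(p_{t-1})$ separate rather than folding it into the coefficient (note $1+\delta a_{t-1-T}=a_{t-T}$, so the two forms differ by exactly one geometric term). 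Making this accounting yield precisely $a_{t-1-T}$ while simultaneously preserving the revenue inequality $q_t-Q(p_{t-1})\ge s(p-p_{ser})/p_{mon}$ is the delicate step; the surrounding revenue comparisons are routine. The ``furthermore'' clause is then immediate: choosing $T$ with $p_T\le p_t$ restarts the geometric sum at $T$, replacing $a_{t-1}$ by $a_{t-1-T}$.
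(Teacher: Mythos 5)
Part~(i) of your proposal is correct and follows the paper's reasoning: $D_t(p)\ge s$ and $p>p_{ser}$ force $\texttt{REV}_t\ge p\cdot s>p_{ser}\cdot s=\texttt{REV}_{mon}$, while any $p_t\ge p_{t-1}$ would give revenue at most $\texttt{REV}_{mon}$ because $D_t$ coincides with $Q$ on that range.

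For part~(ii), however, there is a genuine gap, which you flag yourself but do not close. Your comparison subtracts the daily demand at the \emph{previous} price, so your demand gap is $q_t-Q(p_{t-1})=\delta Z_{t-1}(p_t)+\bigl(Q(p_t)-Q(p_{t-1})\bigr)$, which retains the fresh daily-arrival difference. Feeding this into Lemma~\ref{le:lemma4} --- or into your ``peeling'' identity, which is the same computation since $1+\delta a_{t-1-T}=a_{t-T}$ --- can only yield the coefficient $a_{t-T}$, i.e.\ $Q(p_t)-Q(p_{t-1})\ge (a_{t-T})^{-1}\,s\,(p-p_{ser})/p_{mon}$, strictly weaker than the stated $(a_{t-1-T})^{-1}$ bound. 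That extra geometric term cannot be removed by bookkeeping: it is intrinsic to comparing against $Q(p_{t-1})$. The paper's proof avoids it by absorbing the daily demand \emph{at $p_t$ itself}: from $p_t D_t(p_t)\ge ps=(p-p_{ser})s+p_{ser}s$ and $p_t Q(p_t)\le p_{mon}q_{mon}=p_{ser}s$ one gets $(p-p_{ser})s\le \delta p_t Z_{t-1}(p_t)\le \delta p_{mon} Z_{t-1}(p_t)$, a bound on the \emph{pure} pent-up term; then $Z_{t-1}(p_t)=Z_{t-1}(p_t)-Z_{t-1}(p_{t-1})\le D_{t-1}(p_t)-D_{t-1}(p_{t-1})\le a_{t-1-T}\bigl(Q(p_t)-Q(p_{t-1})\bigr)$ by Lemmas~\ref{le:Nisan-3} and~\ref{le:lemma4} applied at time $t-1$, which is exactly where the index $t-1$ (and $t-1-T$) comes from. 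In your own framework the fix is one line: bound $Q(p_t)\le p_{ser}s/p_t$ instead of $Q(p_{t-1})\le p_{ser}s/p_{t-1}$, so that the gap becomes $q_t-Q(p_t)=\delta Z_{t-1}(p_t)$ exactly, with no residual daily term. As a side remark, your weaker constant $a_{t-T}$ would still suffice for the downstream application in Lemma~\ref{le:Nisan11}, but it does not prove the lemma as stated.
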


\begin{proof}
The proof follows closely the proof of Lemma~9 in \cite{Nisan}. We observe the following:
    We cannot have $p_t=p_{mon}$, as the revenue obtained from $p$ would be higher: $$p s> p_{ser}s = p_{mon}q_{mon}.$$ 
       As $p_t$ gives better revenue than $p$, i.e. we have that $$p_t q_t=p_t D_t(p_t) \geq p s = (p-p_{ser})s + p_{ser}s =  (p-p_{ser})s + p_{mon}q_{mon}.$$
        Separating the total demand at time $t$ to its two components we get
        \begin{align*}
            p_t D_t(p_t) &= \delta p_t Z_{t-1}(p_t) + p_t Q(p_t)\\
                         &\leq\footnotemark \delta p_t Z_{t-1}(p_t) + p_{mon}q_{mon}\\
                         &\leq \delta p_{mon} Z_{t-1}(p_t)+ p_{mon}q_{mon}.
        \end{align*}\footnotetext{In this step we use that $p_{mon}$ maximizes $pQ(p)$ and $q_{mon}=Q(p_{mon})$. Also we know that $p_t<p_{mon}$}
        \item Putting these together we get that 
        \begin{align}
             (p-p_{ser})s + p_{mon}q_{mon} \leq p_t D_t(p_t) \leq \delta p_{mon} Z_{t-1}(p_t)+ p_{mon}q_{mon},
        \end{align}
        that is
        \begin{align}
            (p-p_{ser})s  \leq \delta p_{mon} Z_{t-1}(p_t)\ .
        \end{align}
        Now observe that
        \begin{align*}
            Z_{t-1}(p_t) &= Z_{t-1}(p_t) - Z_{t-1}(p_{t-1})\\
            &\leq D_{t-1}(p_t) - D_{t-1}(p_{t-1}) \\ 
            &\leq (1+\delta+\ldots+\delta^{t-2}) [Q(p_t)-Q(p_{t-1})],
        \end{align*}
        where we used Lemma~\ref{le:Nisan-3} in the last step.
       Hence, it follows that
        \begin{align}
            (p-p_{ser})s  \leq p_{mon}(1+\delta+\ldots+\delta^{t-2}) [Q(p_t)-Q(p_{t-1})]
        \end{align}
        and therefore, 
        \begin{align}
            Q(p_t)-Q(p_{t-1}) \geq (1+\delta+\ldots+\delta^{t-2})^{-1} s \frac{(p-p_{ser})}{p_{mon}}.
        \end{align}
        The second part of the lemma is similar after applying the bound in Lemma~\ref{le:lemma4}:
        \begin{equation}
    D_t(p_t)-D_t(p_{t-1}) \leq  a_{t-T} \cdot (Q(p_t) - Q(p_{t-1})) \ . 
\end{equation}
This completes the proof.
\end{proof}

The remaining technical lemmas and results incorporate the following  condition (which is necessary in our setting): We consider pairs of prices such that 
\begin{align}\label{eq:extra-condition}
    \frac{1}{1-\delta}\cdot (Q(p)-Q(p'))> s && p < p' \ . 
    \tag{gap}
\end{align}
Intuitively speaking, \eqref{eq:extra-condition} says that the function $Q$ is ``sufficiently steep'' for the given $\delta$ and $s$. Next lemma then states that this is enough for the dynamics to eventually get below some price $p'$ as there is some smaller $p<p'$ with ``sufficiently high'' demand.

\begin{lemma}  \label{le:exceed-s}
    For every $p < p'$ satisfying \eqref{eq:extra-condition} there exists $\Delta_0$ s.t. for all $T$ and all $\Delta \geq \Delta_0$ we have that either (a) there exists $T\leq t \leq T+\Delta$ with $p_t<p'$ or (b) $D_{T+\Delta}(p)\geq s$.
\end{lemma}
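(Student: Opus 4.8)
The plan is to prove the contrapositive of alternative (a): I fix the threshold $\Delta_0$ in advance and show that if the price never drops below $p'$ on the whole window $[T,T+\Delta]$, then the demand at price $p$ must have accumulated to at least $s$, which is exactly alternative (b). The engine is the \emph{equality} in Lemma~\ref{le:Nisan-4c}, which quantifies precisely how demand at $p$ builds up relative to $p'$ as long as the intervening prices stay above $p'$.

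First I would choose $\Delta_0$ uniformly in $T$. Writing $c := Q(p)-Q(p')$, the gap hypothesis \eqref{eq:extra-condition} reads $\tfrac{c}{1-\delta}>s$, so in particular $c>0$. Recall $a_k=\tfrac{1-\delta^k}{1-\delta}$ is strictly increasing in $k$ with $a_k \uparrow \tfrac{1}{1-\delta}$ as $k\to\infty$. Hence there is a finite $\Delta_0$, depending only on $\delta,s,p,p'$ and \emph{not} on $T$, with $a_{\Delta_0}\,c\ge s$; monotonicity of $a_k$ then gives $a_\Delta\, c\ge s$ for every $\Delta\ge\Delta_0$.

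Next, I would fix an arbitrary $T$ and any $\Delta\ge\Delta_0$. If some step $T\le t\le T+\Delta$ has $p_t<p'$, then alternative (a) holds and we are done. Otherwise $p_{t'}\ge p'$ for all $t'\in[T,T+\Delta]$, and in particular for all intermediate indices $T<t'<T+\Delta$, which is exactly the hypothesis of Lemma~\ref{le:Nisan-4c}. Applying that lemma to the pair $p<p'$ with final time $t=T+\Delta$ yields
\begin{equation*}
 D_{T+\Delta}(p)-D_{T+\Delta}(p') = a_{\Delta}\,c + \delta^{\Delta}\bigl(Z_T(p)-Z_T(p')\bigr).
\end{equation*}

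Finally I would discard the two nonnegative slacks to land on (b). Since $Z_T$ is non-increasing (it equals the non-increasing $D_T(\cdot)-q_T$ for prices $\le p_T$ and $0$ above, matching at $p_T$), we have $Z_T(p)\ge Z_T(p')$, so the last term is $\ge 0$; moreover $D_{T+\Delta}(p')\ge Q(p')\ge 0$. Therefore $D_{T+\Delta}(p)\ge D_{T+\Delta}(p)-D_{T+\Delta}(p') \ge a_\Delta\, c \ge s$, which is alternative (b). I do not expect a genuine analytic obstacle here once Lemma~\ref{le:Nisan-4c} is available; the only points that deserve care are that $\Delta_0$ can be taken independent of $T$ (it can, because the bound $a_\Delta\, c\to c/(1-\delta)>s$ is uniform in $T$), and that I rely only on the weak facts $Z_T(p)\ge Z_T(p')$ and $D_{T+\Delta}(p')\ge 0$ instead of needing a ``reset'' time $T_0$ with $p_{T_0}\le p$ — the dropped $\delta^{\Delta}$-term only helps the inequality.
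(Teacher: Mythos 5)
Your proposal is correct and follows essentially the same route as the paper's own proof: negate alternative (a), invoke the equality of Lemma~\ref{le:Nisan-4c} with final time $T+\Delta$, discard the nonnegative terms $\delta^{\Delta}\bigl(Z_T(p)-Z_T(p')\bigr)$ and $D_{T+\Delta}(p')$, and pick $\Delta_0$ (independent of $T$) from the monotone convergence $a_\Delta \uparrow \tfrac{1}{1-\delta}$ together with the gap condition \eqref{eq:extra-condition}. If anything, your write-up is slightly more careful than the paper's, which contains a typo ($D_T$ where $D_{T+\Delta}$ is meant) and a garbled logarithm computation for $\Delta_0$ that your monotonicity argument cleanly avoids.
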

\begin{proof}
    Let us observe that, if $p_t \geq p'$ for all $T\leq t \leq T+\Delta$, then using Lemma~\ref{le:Nisan-4c} we get
    \begin{align*}
        D_{T+\Delta}(p) & \geq D_{T}(p) - D_{T}(p') \\
    &\stackrel{Lem~\ref{le:Nisan-4c}}{=}  a_\Delta \cdot (Q(p) - Q(p')) + \delta^{\Delta} (Z_{T}(p)-Z_{T}(p'))\\
        &\geq a_\Delta \cdot (Q(p) - Q(p')) = (1 + \delta + \cdots + \delta^{\Delta-1}) \cdot (Q(p) - Q(p'))\ . 
    \end{align*}
    We next show that, for sufficiently large $\Delta$, the latter quantity must exceed $s$. 
    That is, we can find $\Delta_0$ such that $(1 + \delta + \cdots + \delta^{\Delta_0-1}) \cdot (Q(p) - Q(p'))= s$:
    \begin{align*}
        \frac{1-\delta^{\Delta_0}}{1-\delta} = \frac{s}{Q(p) - Q(p')} && \Leftrightarrow &&
        \delta^{\Delta_0} &= 1-\frac{s (1-\delta)}{Q(p) - Q(p')} \\ 
        && \Leftrightarrow && 
        \Delta_0 &= \ln\left((1-\delta)-\frac{s (1-\delta)}{Q(p) - Q(p')}\right) \\ && \Leftrightarrow && 
        \Delta_0 &= \ln\left(1-\delta \right) + \ln\left(1-\frac{s}{Q(p) - Q(p')}\right) \ .
    \end{align*}
    This completes the proof. 
\end{proof}

The next lemma quantifies the ``minimal $\delta$'' such that \eqref{eq:extra-condition}  holds for a given price $p^*$, and therefore the dynamics eventually goes below $p^*$. 

\begin{lemma}\label{le:Nisan11}
For every $p^\star>p_{ser}$ such that
    \begin{align}\label{eq:p-gap-condition}
        \delta > \delta_{\min}(p^\star) :=  1-\frac{Q(p_{ser})-Q(p^\star)}{s} 
    \end{align}
    the following holds.
    There exists $\Delta$ such that for every $T$ there exists some $T < t \leq T + \Delta$ with $p_t \leq p^\star$.
\end{lemma}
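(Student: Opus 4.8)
The plan is to reduce Lemma~\ref{le:Nisan11} to the already-established Lemma~\ref{le:exceed-s} and Lemma~\ref{le:Nisan9} by verifying that condition~\eqref{eq:p-gap-condition} is exactly what makes the ``gap'' condition~\eqref{eq:extra-condition} available for a suitable pair of prices. First I would observe that \eqref{eq:p-gap-condition} rearranges into $\frac{1}{1-\delta}\bigl(Q(p_{ser})-Q(p^\star)\bigr)>s$, which is precisely \eqref{eq:extra-condition} with the choice $p=p_{ser}$ and $p'=p^\star$ (recall $p^\star>p_{ser}$ so $Q(p_{ser})>Q(p^\star)$ by strict monotonicity). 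Thus for the pair $(p_{ser},p^\star)$ the gap condition holds, and Lemma~\ref{le:exceed-s} supplies a $\Delta_0$ such that for every $T$ and every $\Delta\ge\Delta_0$ either some price $p_t<p^\star$ already occurs in $[T,T+\Delta]$ (which is what we want), or $D_{T+\Delta}(p_{ser})\ge s$.

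The heart of the argument is therefore to handle the second alternative. If $D_{T+\Delta}(p_{ser})\ge s$, then taking $p=p_{ser}$ in Lemma~\ref{le:Nisan9} is not directly allowed since that lemma requires $p>p_{ser}$; instead I would apply Lemma~\ref{le:Nisan9} with the threshold price set slightly above $p_{ser}$, or argue directly that high demand at (or just above) $p_{ser}$ forces the monopolist's optimal price down. Concretely, since $D_{T+\Delta}(p^\star)\ge D_{T+\Delta}(p_{ser})-\bigl(Q(p_{ser})-Q(p^\star)\bigr)$ is large whenever demand at $p_{ser}$ is large, and since serving $s$ transactions at any price $p\ge p^\star$ would yield revenue at most $p\cdot s$ while the monopolist can always guarantee $\texttt{REV}_{mon}=p_{mon}q_{mon}=p_{ser}s$, I would show that once the demand accumulated at $p^\star$ reaches $s$, the revenue-maximizing price at that step cannot exceed $p^\star$. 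This is exactly the mechanism already isolated in part~(i) of Lemma~\ref{le:Nisan9}: ample demand at a price strictly above $p_{ser}$ makes the monopolist price suboptimal and pushes $p_t$ strictly below $p_{t-1}$, hence eventually to or below $p^\star$.

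Assembling these pieces, I would fix $\Delta:=\Delta_0$ from Lemma~\ref{le:exceed-s} (possibly enlarged by a constant to absorb the extra steps needed after demand first exceeds $s$) and argue as follows: given arbitrary $T$, consider the window $[T,T+\Delta]$. In case~(a) we are immediately done. In case~(b), demand at $p^\star$ has built up to at least $s$ by step $T+\Delta$, so by the revenue comparison above the chosen price at that step satisfies $p_{T+\Delta}\le p^\star$, again giving the desired $t$. In either case we exhibit some $T<t\le T+\Delta$ with $p_t\le p^\star$, which is the claim.

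The main obstacle I anticipate is the boundary issue of applying Lemma~\ref{le:Nisan9} at $p=p_{ser}$ rather than strictly above it, together with ensuring the bookkeeping of the window length is uniform in $T$: Lemma~\ref{le:exceed-s} guarantees $D_{T+\Delta}(p_{ser})\ge s$ only at the endpoint, and I must be careful that the revenue argument then pins the price at that endpoint (or shortly after) below $p^\star$ without needing demand to persist. I expect this to require either choosing the intermediate threshold price in $(p_{ser},p^\star)$ and invoking continuity of $Q$, or directly re-running the short revenue computation $p\cdot\min(s,D_{T+\Delta}(p))$ to confirm that no price above $p^\star$ can beat what is achievable at $p^\star$ once $D_{T+\Delta}(p^\star)\ge s$.
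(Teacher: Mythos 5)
Your setup is sound as far as it goes: \eqref{eq:p-gap-condition} is indeed exactly \eqref{eq:extra-condition} for the pair $(p_{ser},p^\star)$, Lemma~\ref{le:exceed-s} then yields the dichotomy you state, and your first instinct for the boundary issue (replace $p_{ser}$ by an intermediate price in $(p_{ser},p^\star)$, using continuity of $Q$ and strictness of \eqref{eq:p-gap-condition}) is the right one. But the way you then close case (b) is wrong, for two reasons. First, the inequality $D_{T+\Delta}(p^\star)\geq D_{T+\Delta}(p_{ser})-\bigl(Q(p_{ser})-Q(p^\star)\bigr)$ is false in general: by Lemma~\ref{le:Nisan-3} the gap $D_t(p_{ser})-D_t(p^\star)$ is only bounded by $a_t\bigl(Q(p_{ser})-Q(p^\star)\bigr)$, and the factor $a_t$ (as large as $\tfrac{1}{1-\delta}$) is precisely what makes that gap grow; in fact, while prices hover just above $p^\star$, pent-up demand accumulates at prices \emph{below} $p^\star$, whereas $D_t(p^\star)$ itself can stay near $Q(p^\star)<s$ forever. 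Second, and more fundamentally, your revenue comparison runs backwards: if $D_t(p^\star)\geq s$ and $\hat p$ denotes the largest price with $D_t(\hat p)\geq s$, then every price $p<\hat p$ earns revenue $p\cdot s<\hat p\cdot s$ and is strictly dominated, so the maximizer satisfies $p_t\geq \hat p\geq p^\star$. Accumulated demand at a price never pushes the chosen price to or below that price in a single step; what it buys (Lemma~\ref{le:Nisan9}) is only $p_t<p_{t-1}$ together with the quantitative decrement $Q(p_t)-Q(p_{t-1})\geq a_{t-1}^{-1}\,s\,(p-p_{ser})/p_{mon}$.

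The missing ingredient is the iteration-plus-contradiction that the paper's proof uses. Assume prices stay above $p^\star$ throughout $[T,T+\Delta]$; pick $p\in(p_{ser},p^\star)$ close enough to $p_{ser}$ that \eqref{eq:extra-condition} still holds for $(p,p^\star)$. Then after $\Delta_0$ steps the accumulated demand at $p$ reaches $s$ and \emph{remains} at least $s$ for as long as the hypothesis holds (the bound $D_t(p)\geq a_{t-T}\bigl(Q(p)-Q(p^\star)\bigr)$ from Lemma~\ref{le:Nisan-4c} is increasing in $t$), so Lemma~\ref{le:Nisan9} applies at every subsequent step. Summing the per-step decrements, each at least $(1-\delta)\,s\,(p-p_{ser})/p_{mon}>0$, forces $Q(p_{T+\Delta})-Q(p_{T+\Delta_0})$ to grow linearly in $\Delta$, while it is bounded above by $Q(p_{ser})-Q(p_{mon})$ because all prices lie in $[p_{ser},p_{mon}]$. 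This is a contradiction once $\Delta$ exceeds $\Delta_0$ plus an explicit term of order $\bigl(Q(p_{ser})-Q(p_{mon})\bigr)p_{mon}/\bigl((1-\delta)\,s\,(p-p_{ser})\bigr)$ -- so the needed enlargement of the window is not a vague ``constant'' but this quantity, which is uniform in $T$ as required. Without this cumulative argument your case (b) cannot be closed: no single-step revenue comparison can place $p_{T+\Delta}$ at or below $p^\star$, since the price may decrease by an arbitrarily small amount at each step and a priori could stay above $p^\star$ indefinitely.
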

\begin{proof}
    By contradiction, assume $T=0$ or $T$ such that  $p_t > p^\star$ for all $t\in[T,T+\Delta]$, for all $\Delta$. Let 
    \begin{align}
         \label{eq:lem-11:p-def}
         p := \frac{p^\star +p_{ser}}{2}
    \end{align}
    so $p_{ser} < p < p^\star$ and $p^\star - p_{ser} = 2(p-p_{ser})$. Next observe that \eqref{eq:p-gap-condition} says that the condition \eqref{eq:extra-condition} required in Lemma~\ref{le:exceed-s} holds for $p = p_{ser}$ and $p'=p^\star$. Hence, there exists $\Delta_0$ after which $D_t(p) \geq s$ for all $t\geq T+\Delta_0$ until the first time that $p_t\leq p^\star$. Fix $\Delta>\Delta_0$~\footnote{It may happen that such a $\Delta$ does not exist. In particular, if $D_t(p) \geq s$ only for $t=T+\Delta_0$ and for $t=T+\Delta_0 +1$ we may have $p_t\leq p^\star$.} such that $D_t(p) \geq s$ for all $t \in [T+\Delta_0, T+\Delta]$.
    By Lemma~\ref{le:Nisan9}  we get a sequence of decreasing prices
    \begin{align}
        p_{T+\Delta_0} > p_{T+\Delta_0+1} > \cdots > p_{T+\Delta} 
    \end{align}
    such that, for all $t\in [T+\Delta_0+1, T + \Delta]$,  we have 
    \begin{align}
        Q(p_{t}) - Q(p_{t-1}) \geq \frac{1}{a_{t-1}}
        \cdot \frac{p - p_{ser}}{p_{mon}}\cdot s \ . 
    \end{align}
    Hence, using $a_t = \frac{1-\delta^t}{1-\delta}$, we have 
    \begin{align*}
        Q(p_{T+\Delta}) - Q(p_{T+\Delta_0}) &\geq  
        \sum_{t=T+\Delta_0+1}^{T+\Delta} \frac{1}{a_{t-1}}\cdot \frac{p - p_{ser}}{p_{mon}}\cdot s\\&=
        \left( \frac{1- \delta}{1 - \delta^{\Delta_0}} + \cdots  + \frac{1-\delta}{1 - \delta^{\Delta -1}}\right)\cdot \frac{p - p_{ser}}{p_{mon}}\cdot s \\ &\geq  
        \frac{1-\delta}{1 - \delta^{\Delta_0}} \cdot (\Delta - \Delta_0) \cdot \frac{p - p_{ser}}{p_{mon}}\cdot s \ . 
    \end{align*}
    Next observe that  $Q(p_{T+\Delta}) - Q(p_{T+\Delta_0}) \leq Q(p_{ser})-Q(p_{mon})$, since all prices are between $p_{ser}$  and $p_{mon}$, and $Q$ is decreasing. We therefore get a contradiction if this inequality holds:
    $$Q(p_{ser})-Q(p_{mon}) < \frac{1-\delta}{1 - \delta^{\Delta_0}} \cdot (\Delta - \Delta_0) \cdot \frac{p - p_{ser}}{p_{mon}}\cdot s \ .$$ 
    This condition is true whenever $$\Delta> \Delta_0 + Q(p_{ser})-Q(p_{mon}) \frac{p_{mon}(1-\delta^{\Delta_0})}{s(p-p_{ser})(1-\delta)}$$ and thus we have the contradiction for such $\Delta$. We conclude that we must have $p_t \leq p^\star$.
\end{proof}

We next provide a technical lemma relating $\delta$ and values of $p^\star$ for which the dynamics must take the monopolist price infinitely often. 
\begin{lemma}\label{lem: nisan13}
    Suppose there exist two values $p^\star>p_{ser}$ and $q^\star>q_{ser}$ satisfying \eqref{eq:p-gap-condition} and the following inequality:\footnote{Note that $q^\star \neq Q(p^\star)$ and in particular $q^\star>q_{ser}=Q(p_{ser})>Q(p^\star)$.}
    \begin{align}\label{eq:pq-pmonqmon}
        p^\star q^\star <  p_{mon}q_{mon}=p_{ser} \cdot s\ . 
    \end{align}
    Then, there exist infinitely many $t$ such that $p_t = p_{mon}$.
\end{lemma}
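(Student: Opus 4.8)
The plan is a proof by contradiction. Suppose only finitely many steps satisfy $p_t=p_{mon}$, and let $T_0$ be the last such step (it exists, since $D_1=Q$ forces $p_1=p_{mon}$). By Lemma~\ref{lem: upperbound} every later step must strictly lower the price, so $p_{T_0}=p_{mon} > p_{T_0+1} > p_{T_0+2} > \cdots$ is strictly decreasing and bounded below by $p_{ser}$ (Lemma~\ref{lemma:lowerbound}). Since $p^\star$ satisfies the gap condition~\eqref{eq:p-gap-condition}, Lemma~\ref{le:Nisan11} forces the dynamics below $p^\star$ within bounded windows; choosing a step $t_1 > T_0$ with $p_{t_1}\le p^\star$ and invoking strict monotonicity gives $p_t \le p^\star$ for all $t \ge t_1$.

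The first step is to turn this price bound into a quantity bound. At every step the monopolist can always secure at least $\texttt{REV}_{mon}=p_{mon}q_{mon}$, so $p_t q_t \ge p_{mon}q_{mon}$. For $t \ge t_1$ this together with $p_t \le p^\star$ yields $q_t \ge p_{mon}q_{mon}/p^\star > q^\star$, where the last inequality is precisely the revenue hypothesis $p^\star q^\star < p_{mon}q_{mon}$. Hence the supplied quantity stays strictly above $q^\star$ from $t_1$ onward.

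The second, and central, step is to expand $q_t = D_t(p_t)$ by unrolling the pent-up-demand recursion \emph{only} back to $T_0$. Because prices strictly decrease on $\{T_0,T_0+1,\dots\}$ we have $p_t < p_\tau$ for every $T_0 \le \tau < t$, so $Z_\tau(p_t)=D_\tau(p_t)-q_\tau$ there and the recursion $D_{\tau+1}(p_t)=\delta\bigl(D_\tau(p_t)-q_\tau\bigr)+Q(p_t)$ applies at each such step. Unrolling gives
\begin{equation*}
    q_t = \delta^{\,t-T_0} D_{T_0}(p_t) + a_{t-T_0}\,Q(p_t) - \sum_{j=T_0}^{t-1} \delta^{\,t-j} q_j ,
\end{equation*}
with $a_{t-T_0}=\tfrac{1-\delta^{t-T_0}}{1-\delta}$ as in~\eqref{eq:at-bt-quantities}. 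As $t \to \infty$ the first term vanishes ($D_{T_0}(p_t)\le D_{T_0}(p_{ser})$ is bounded and $\delta<1$), the middle term is at most $q_{ser}/(1-\delta)$ (since $a_{t-T_0}\le 1/(1-\delta)$ and $Q(p_t)\le Q(p_{ser})=q_{ser}$), and the subtracted sum is at least $q^\star\sum_{j=t_1+1}^{t-1}\delta^{t-j}\to q^\star\delta/(1-\delta)$, using $q_j>q^\star$.

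Combining $q_t > q^\star$ with the displayed identity and letting $t\to\infty$ gives $q^\star \le q_{ser}/(1-\delta)-q^\star\delta/(1-\delta)$, i.e.\ $q^\star \le q_{ser}$, contradicting the hypothesis $q^\star > q_{ser}$. Therefore $p_t=p_{mon}$ for infinitely many $t$. The delicate point I expect is the unrolling: running the recursion back only to $T_0$ (rather than to $t=0$) lets the possibly non-monotone early history be absorbed into the harmless decaying term $\delta^{t-T_0}D_{T_0}(p_t)$, so that I never need $p_t$ to be an all-time minimum, which is exactly the obstruction to applying the closed form~\eqref{eq:D-rewritten} directly; everything else is geometric-series bookkeeping and a single limit.
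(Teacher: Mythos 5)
Your proof is correct, and up to the crucial last step it follows the same route as the paper: by contradiction take a last monopolist time, obtain a strictly decreasing price sequence in $[p_{ser},p_{mon}]$ (Lemmas~\ref{lemma:lowerbound} and~\ref{lem: upperbound}), push it below $p^\star$ via Lemma~\ref{le:Nisan11} and \eqref{eq:p-gap-condition}, and combine the revenue guarantee $p_tq_t\geq p_{mon}q_{mon}$ with \eqref{eq:pq-pmonqmon} to get $q_t>q^\star$ from some point on. Where you genuinely differ is in how the contradiction is closed. The paper closes it in one line, asserting that monotonicity of $Q$ and $p_t\geq p_{ser}$ imply $q_t\leq q_{ser}$; as stated this inference is gapped, because $q_t=D_t(p_t)$ is the total (daily plus pent-up) demand rather than $Q(p_t)$, and it cannot be repaired pointwise: the lemma's hypotheses give $p^\star<p_{ser}s/q^\star<\overline{p}_{ser}$, so once $p_t\leq p^\star$ the revenue bound itself forces $q_t\geq p_{ser}s/p_t>q_{ser}$. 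Your discounted unrolling $q_t=\delta^{t-T_0}D_{T_0}(p_t)+a_{t-T_0}Q(p_t)-\sum_{j=T_0}^{t-1}\delta^{t-j}q_j$ (valid because prices after $T_0$ strictly decrease, so the pent-up recursion applies at every intermediate step) supplies exactly the missing accounting: persistent supply above $q^\star$ must be financed by discounted daily inflow at most $q_{ser}/(1-\delta)$ net of discounted past supply at least $q^\star\delta/(1-\delta)$, which in the limit yields $q^\star\leq q_{ser}$, contradicting $q^\star>q_{ser}$. Two further remarks: your argument makes essential use of $\delta<1$ (boundedness of $a_t$ and geometric decay), which is precisely the regime of the lemma; and your device of unrolling only back to $T_0$, absorbing the arbitrary earlier history into the vanishing term $\delta^{t-T_0}D_{T_0}(p_t)$, correctly sidesteps the need for $p_t$ to be an all-time minimum, which a direct application of \eqref{eq:D-rewritten} would require. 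In short: same skeleton, but your quantitative final step is rigorous where the paper's corresponding step is not, so your write-up is in fact a strengthening of the paper's proof rather than a mere variant.
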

\begin{proof}By contradiction, assume there is a last time $\ell\geq 1$ such that the dynamics takes the monopolist price. We then observe the following:
\begin{enumerate}
    \item We have a sequence of decreasing prices (Lemma~\ref{lem: upperbound}) 
    \begin{align}
        p_{\ell} > p_{\ell+1} > \cdots
    \end{align}
    satisfying $p_t \geq p_{ser}$ for all $t \geq \ell$ (Lemma~\ref{lemma:lowerbound}). 
    \item By Lemma~\ref{le:Nisan11} there is some $T_0\geq \ell$ such that, for all $t\geq T_0$, we have 
    \begin{align}
        p_t < p^\star \ ,  && p_tq_t \geq p_{mon}q_{mon}=p_{ser}\cdot s && \overset{\eqref{eq:pq-pmonqmon}}{\Rightarrow} && q_t > q^\star 
    \label{eq:le:nisan13:implication}
    \end{align}
    where the second inequality holds because the dynamics prefers  $p_t$  to $p_{mon}$. 
    \item The monotonicity of $Q$ and $p_t \geq p_{ser}$ imply $q_t\leq q_{ser}<q^\star$, thus contradicting   \eqref{eq:le:nisan13:implication}.  
\end{enumerate}
This completes the proof.
\end{proof}

We next turn our attention to the existence of $p^\star$ and $q^\star$.

\begin{lemma}\label{le:delta-vs-price}
For every continuous $Q$ such that the equilibrium revenue is less than the monopolist revenue, the following hold:
\begin{enumerate}
    \item \label{itm:le:delta-vs-price-1}
    Quantities in \eqref{eq:def-pserbar-qserbar} satisfy $\overline{p}_{ser}>p_{ser}$ and  $\overline{q}_{ser}< q_{ser}$.
    \item \label{itm:le:delta-vs-price-2} For every $p^\star>p_{ser}$ satisfying $p^\star<\overline{p}_{ser}$ condition \eqref{eq:pq-pmonqmon} in Lemma~\ref{lem: nisan13} holds for some $q^\star>q_{ser}$. 
    \item \label{itm:le:delta-vs-price-3} For every $\delta>\overline{\delta}_{ser}$, there exists $p^\star>p_{ser}$ satisfying $p^\star<\overline{p}_{ser}$ such that condition \eqref{eq:p-gap-condition} in Lemma~\ref{le:Nisan11} holds. In particular, this holds true for any $\TM{p_{ser}}{\delta} < p^\star<\overline{p}_{ser}$. 
\end{enumerate}
\end{lemma}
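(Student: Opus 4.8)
The plan is to first decode the hypothesis, and then dispatch the three items in order, each reducing to the monotonicity of $Q$ plus one short algebraic inequality. The key observation is that ``the equilibrium revenue is less than the monopolist revenue'' is equivalent to the single inequality $q_{ser}<s$. Indeed, by definition $p_{ser}\cdot s=p_{mon}q_{mon}=\texttt{REV}_{mon}$, so the revenue at the serial price facing the bare daily demand equals $p_{ser}\cdot\min(s,q_{ser})$, and this is strictly below $\texttt{REV}_{mon}=p_{ser}\cdot s$ exactly when $q_{ser}<s$. I would record $q_{ser}<s$ at the outset and use it throughout.

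For Item~\ref{itm:le:delta-vs-price-1} I would simply compute $\overline{p}_{ser}=\frac{p_{ser}s}{q_{ser}}>p_{ser}$, which is immediate from $s>q_{ser}$, and then $\overline{q}_{ser}=Q(\overline{p}_{ser})<Q(p_{ser})=q_{ser}$ follows because $Q$ is strictly decreasing. For Item~\ref{itm:le:delta-vs-price-2} I would read condition \eqref{eq:pq-pmonqmon}, namely $p^\star q^\star<p_{ser}s$, as an upper bound $q^\star<\frac{p_{ser}s}{p^\star}$ on the sought quantity. Since $p^\star<\overline{p}_{ser}=\frac{p_{ser}s}{q_{ser}}$, we get $\frac{p_{ser}s}{p^\star}>q_{ser}$, so the open interval $\left(q_{ser},\,\frac{p_{ser}s}{p^\star}\right)$ is nonempty; any $q^\star$ chosen from it satisfies both $q^\star>q_{ser}$ and \eqref{eq:pq-pmonqmon}.

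For Item~\ref{itm:le:delta-vs-price-3} I would first rewrite condition \eqref{eq:p-gap-condition}, $\delta>1-\frac{q_{ser}-Q(p^\star)}{s}$, equivalently as $Q(p^\star)<q_{ser}-(1-\delta)s=Q(\TM{p_{ser}}{\delta})$, i.e.\ (by monotonicity of $Q$) as $p^\star>\TM{p_{ser}}{\delta}$. It then suffices to show that the interval $(\TM{p_{ser}}{\delta},\overline{p}_{ser})$ is nonempty and lies strictly above $p_{ser}$; any $p^\star$ in it then works. The lower endpoint exceeds $p_{ser}$ because $Q(\TM{p_{ser}}{\delta})=q_{ser}-(1-\delta)s<q_{ser}=Q(p_{ser})$ (using $\delta<1$). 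For $\TM{p_{ser}}{\delta}<\overline{p}_{ser}$ I would invoke precisely the threshold $\delta>\overline{\delta}_{ser}$: it gives $(1-\delta)s<(1-\overline{\delta}_{ser})s=q_{ser}-\overline{q}_{ser}$, hence $Q(\TM{p_{ser}}{\delta})=q_{ser}-(1-\delta)s>\overline{q}_{ser}=Q(\overline{p}_{ser})$, and monotonicity yields $\TM{p_{ser}}{\delta}<\overline{p}_{ser}$.

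The content here is elementary, and the only care required is bookkeeping the direction of inequalities under the decreasing map $Q$. Accordingly, the ``main obstacle'' is conceptual rather than technical: recognizing that the stated hypothesis collapses to $q_{ser}<s$, and that the definitions of $\overline{\delta}_{ser}$ and $\TM{p_{ser}}{\delta}$ in \eqref{eq:def-pserbar-qserbar}--\eqref{eq:def-pserdelta} are engineered so that $\overline{\delta}_{ser}$ is exactly the threshold at which the admissible interval in Item~\ref{itm:le:delta-vs-price-3} closes up, leaving no slack to spare.
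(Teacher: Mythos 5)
Your proof is correct and follows essentially the same route as the paper's: Item~\ref{itm:le:delta-vs-price-1} from $q_{ser}<s$ plus monotonicity of $Q$, Item~\ref{itm:le:delta-vs-price-2} by choosing $q^\star$ in the nonempty interval $\left(q_{ser},\, p_{ser}s/p^\star\right)$ (the paper equivalently takes $q^\star=\rho\, q_{ser}$ for $\rho>1$ small), and Item~\ref{itm:le:delta-vs-price-3} by showing $p_{ser}<\TM{p_{ser}}{\delta}<\overline{p}_{ser}$ and that condition~\eqref{eq:p-gap-condition} is precisely $p^\star>\TM{p_{ser}}{\delta}$ under the decreasing map $Q$, which is the same inequality chain the paper organizes through $\delta_{\min}(\cdot)$. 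The one caveat is your decoding of the hypothesis: ``equilibrium revenue'' means the revenue $p_{eq}\cdot s$ at the market-clearing price $p_{eq}$ defined by $Q(p_{eq})=s$, not the revenue $p_{ser}\cdot\min(s,q_{ser})$ at the serial price; the paper derives $q_{ser}<s$ from $p_{eq}<p_{ser}$ and monotonicity, whereas your identification is nonstandard. Since your reading happens to yield the same condition $q_{ser}<s$, which is all the remainder of your argument uses, nothing downstream is affected.
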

\begin{proof}
We distinguish the three parts:
\begin{enumerate}
    \item As shown in \cite{Nisan}, if the equilibrium revenue is smaller than the monopolist revenue, then $q_{ser}<s$. Indeed, the equilibrium  is given by the price $p_{eq}$ such that $q_{eq}:=Q(p_{eq})=s$, and $\texttt{REV}_{eq}=p_{eq}s<\texttt{REV}_{mon} = p_{mon}q_{mon}=p_{ser}s$, thus implying $p_{eq} < p_{ser}$. Since $Q$ is decreasing, the previous inequality yields $s=Q(p_{eq}) > Q(p_{ser})=q_{ser}$. Since $q_{ser}<s$, we have  $\bar{p}_{ser}=\frac{p_{ser}s}{q_{ser}}>p_{ser}$   and, again using monotonicity of $Q$, also $\bar{q}_{ser}=Q(\bar{p}_{ser}) < Q(p_{ser}) = q_{ser}$. 
    \item Since $p^\star < \bar{p}_{ser}=\frac{p_{ser}s}{q_{ser}}$, we have that $p^\star q_{ser}< \bar{p}_{ser} q_{ser} = p_{ser}s$. Therefore, for a sufficiently small $\rho>1$, we have that  $p^\star q_{ser}\cdot \rho < p_{ser}s$. Hence, by taking $q^\star :=q_{ser} \rho > q_{ser}$ the desired condition is satisfied: $$p^\star q^\star   = p^\star q_{ser}\cdot \rho < p_{ser}s = p_{mon}q_{mon} \ .$$
    \item  
    We first show that $\TM{p_{ser}}{\delta}< \overline{p}_{ser}$.  Observe that 
    \begin{align}
         \delta_{\min}(\TM{p_{ser}}{\delta}) = & 1-\frac{q_{ser}-Q(\TM{p_{ser}}{\delta})}{s} \\  \stackrel{\eqref{eq:def-pserdelta}}{=} &  1-\frac{q_{ser}-(q_{ser} - (1-\delta)\cdot s)}{s} = \delta \label{eq:deltamin-pdelta-delpta}\\
        > &  \overline{\delta}_{ser} = 1-\frac{q_{ser}-Q(\overline{p}_{ser})}{s} \ . 
    \end{align}
    Therefore,  $Q(\TM{p_{ser}}{\delta})>Q(\overline{p}_{ser})$ and, by monotonicity of $Q$, we have $\TM{p_{ser}}{\delta}< \overline{p}_{ser}$. For any $p^\star$ such that $\TM{p_{ser}}{\delta} < p^\star<\overline{p}_{ser}$ we have
    $Q(p^\star) < Q(\TM{p_{ser}}{\delta})$ and therefore 
\begin{align}
        \delta_{\min}(p^\star) =  1-\frac{q_{ser}-Q(p^\star)}{s} <   1-\frac{q_{ser} -Q(\TM{p_{ser}}{\delta})}{s} \stackrel{\eqref{eq:deltamin-pdelta-delpta}}{=} \delta \ . 
    \end{align}

 \end{enumerate}
This completes the proof. 
\end{proof}

We are now in a position to prove  Theorem~\ref{thm1}, which we restate here for convenience. 
\begin{T1}
   For any strictly decreasing demand function $Q$ and supply $s$ the following holds:
    \begin{enumerate}
        \item The minimum admission price is at least $p_{ser}$, and thus transactions paying less than this price will never be included. In particular, the dynamics stay always between $p_{ser}$ and $p_{mon}$, that is,  prices $p_t$ satisfy $p_{ser}\leq p_t \leq p_{mon}$ for all $t\geq 1$. 
        Moreover, at each step $t$, the prices either decrease ($p_t < p_{t-1}$) or they jump up to the monopolist price ($p_t=p_{mon}$). 
        \item
        For every  $\delta>\overline{\delta}_{ser}$, the minimum admission price is at most $\TM{p_{ser}}{\delta}$ defined by \eqref{eq:def-pserdelta} which satisfies $p_{ser}< \TM{p_{ser}}{\delta} < \overline{p}_{ser}$.
        Moreover, the dynamics  pass through the monopolist price $p_{mon}$ infinitely often. 
        \item Every price  larger than $p_{ser}$ is an admission price for a sufficiently large $\delta$. 
        That is, for every $p^\star >p_{ser}$, there exists $\delta_{min}(p^\star)<1$ such that $p^\star$ is an admission price for every $\delta > \delta_{min}(p^\star)$. 
        Moreover, the dynamics  pass through the monopolist price $p_{mon}$ infinitely often. 
    \end{enumerate}
   Therefore, transactions paying at least $p_{mon}$ are immediately included, and this is tight to guarantee immediate inclusion (as there are infinitely many steps for which paying less will delay admission to a later step). 
\end{T1}
\begin{proof}
    We distinguish the three parts:
    \begin{enumerate}
        \item The bounds on the prices are due to Lemmas~\ref{lemma:lowerbound} and \ref{lem: upperbound}. The condition on the price changes is simply the observation that for $p\geq p_{t-1}$ we have $D_t(p) = Q(p)$ and thus the dynamics either take the monopolist price ($p_t=p_{mon}$) or take a smaller price ($p_t<p_{t-1}$). 
        \item Consider any $\TM{p_{ser}}{\delta}$ such that $\TM{p_{ser}}{\delta} < p^\star<\overline{p}_{ser}$. Lemma~\ref{le:Nisan11} together with Lemma~\ref{le:delta-vs-price} (Item~\ref{itm:le:delta-vs-price-3}) imply  that there exists $\Delta$ such that for every $T$ there exists some $T < t \leq T + \Delta$ with $p_t \leq p^\star$. Lemma~\ref{le:delta-vs-price} (Item~\ref{itm:le:delta-vs-price-2}) states that the conditions in Lemma~\ref{lem: nisan13} hold for  $p^\star>p_{ser}$. The latter implies that the dynamics takes the monopolist price infinitely often. 
        
        \item The first part follows directly from Lemma~\ref{le:Nisan11}
    and from the fact that $p^\star > p_{ser}$ implies $Q(p^\star) <  Q(p_{ser})$ and thus $\delta_{\min}(p^\star)<1$. The second part follows from Lemma~\ref{lem: nisan13} and Item~\ref{itm:le:delta-vs-price-2} of Lemma~\ref{le:delta-vs-price}.
    \end{enumerate}
    This completes the proof.
\end{proof}

\subsection{Remaining Proofs}\label{app:remaining_proofs}
In this section, we provide the remaining proofs.

\begin{proof}[Proof of Theorem~\ref{th:lb-general}]
We first rewrite Equation~\eqref{revenue-diff} using Equation~\eqref{eq:D-rewritten} as follows:
\begin{align}
    f_t(p) = & p\cdot D_t(p) - p_{mon}q_{mon} = p \cdot (a_t Q(p) - b_t) - p_{mon}q_{mon} \ ,
    \end{align}
    where $a_t$ and $b_t$ are defined in \eqref{eq:at-bt-quantities}. 
For any $\overline{a}_t \geq a_t$ and any $\underline{b}_t \leq b_t$ we can obtain an upper  bound on $f_t(\cdot)$:
\begin{align}\label{eq:min-p-function-ub}
    f_t(p) \leq \overline{f}_t(p) := p \cdot (\overline{a}_t Q(p) - \underline{b}_t) - p_{mon}q_{mon} \ .
\end{align}
For any $p$ such that $\overline{f}_t(p)<0$, we obviously have $f_t(p)< 0$, which implies that it cannot be $p_t = p$. Indeed, definition of  $f_t(p)< 0$ implies $p\cdot D_t(p) < p_{mon}q_{mon}$ and hence $p$ will not be taken.

Next observe that, since $p_t \leq p_{mon}$, the monotonicity of $Q$ implies $q_t \geq q_{mon}$, thus implying 
\begin{align}\label{eq:min-p-function-b-term-simple}
    b_t \geq (a_t-1) q_{mon} =: \underline{b}_t  \ . 
\end{align}
Finally, observe that  for $\underline{b}_t$ we have $\overline{f}_t(p)$ is exactly $F_t$ in \eqref{eq:lb-improved-F}. This completes the proof.  
\end{proof}

\begin{proof}[Proof of Proposition~\ref{prop: improvedlowerbound}]
Though Theorem~\ref{prop:Q-eps-fun} implies this result, we  next provide the proof of this proposition to illustrate the technique on a slightly simpler case. 

Theorem~\ref{th:lb-general} yields a lower bound for the smallest $p$ of the dynamics. Since $p_{mon}=\frac{1}{2}=q_{mon}$ we have
    \begin{align}
       F_t(p) 
       =&  \left(\dfrac{\left(1-{\delta}^t\right)\left(1-p\right)}{1-{\delta}}-\dfrac{\frac{1-{\delta}^t}{1-{\delta}}-1}{2}\right)p-\dfrac{1}{4}
    \end{align}
    which has two roots. One root is at $\frac{1}{2} (=p_{mon})$ and the other one is  
    \begin{align}
        p^\star_t = \dfrac{1-{\delta}}{2(1 - \delta^t)} \ . 
    \end{align}
    We can make the following observations:
    \begin{enumerate}
        \item For $t\rightarrow \infty$ and fixed $\delta\in (0,1)$ we have $p^\star_t \rightarrow p^\star:= \frac{1-\delta}{2}$ meaning that the dynamics cannot go below this minimum price $p^\star$. 
        \item For $\delta \rightarrow 1$  and for fixed $t\geq 2$ we have $p^\star_t \rightarrow \frac{1}{2t}$ meaning that the dynamics cannot be below $\frac{1}{2t}$ at step $t$.   
    \end{enumerate}
        This completes the proof.
    \end{proof}

\begin{proof}[Proof of Theorem~\ref{prop:Q-eps-fun}]
Let us assume $s=1$ (this is wlog since for any $s$ we can simply consider a ``rescaled'' demand function $Q_\epsilon(p)/s$). We apply Theorem~\ref{th:lb-general} to  $Q_\epsilon \in \mathcal{Q}$ and consider the corresponding function 
\begin{align}    F_t(p) = p\cdot\left(\dfrac{\left(1-{\delta}^t\right)\left(-2{\epsilon}p+{\epsilon}+\frac{1}{2}\right)}{1-{\delta}}-\dfrac{\frac{1-{\delta}^t}{1-{\delta}}-1}{2}\right)-\dfrac{1}{4}, 
    \end{align}
which has two roots. One root is at $\frac{1}{2}$($=p_{mon}$) and the second root is \begin{equation}\label{root2}
    p^\star_t:=\dfrac{{\delta}-1}{4{\delta}^t{\epsilon}-4{\epsilon}}.
\end{equation}
Function $F_t(p)$ is concave as its second derivative is $-\frac{\left(4{\delta}^{t} - 4\right) {\epsilon}}{{\delta} - 1}<0$. Hence, it assumes positive values only for $p$ in the interval between these two roots.
Next observe  that
\begin{align}
    p^\star_t \overset{t \to \infty}{\rightarrow} \frac{1-\delta}{4\epsilon} =: p^\star.
\end{align}
Furthermore, we have that $p^\star_t<p^\star$. Therefore, for $t \rightarrow \infty$ and  $\epsilon < \frac{1-\delta}{2}$ this root $p^\star_t$ is bigger than the monopolist price and this implies that the dynamics never go below $p_{mon}$. Similarly, for a fixed $\epsilon$ (i.e. a given $Q_{\epsilon} \in \mathcal{Q}$) we can find the $\delta$ that satisfies the latter condition, the dynamics never go below $p_{mon}$.

Furthermore, if $\epsilon \geq \frac{1-\delta}{2(1-\delta^t)}$ for all $t$, then the root in \eqref{root2} is below $\frac{1}{2}$, which gives us a lower bound for the price dynamics.
Note that, if $\epsilon \geq \frac{1-\delta}{2(1-\delta^1)} = \frac{1}{2}$ holds, then $\epsilon \geq \frac{1-\delta}{2(1-\delta^t)}$ for all $t$.
In particular, for a fixed $\epsilon$ we can find two intervals for $\delta$ such that we end up with a lower bound less than $p_{mon}$ on one interval and a lower bound equal to $p_{mon}$ on the other interval.
\end{proof}

\begin{proof}[Proof of Proposition~\ref{prop:generalQ}]
Similar to earlier calculations, we consider the function in Equation~\eqref{eq:lb-improved-F}.
\begin{align}
       F_t(p) = & p   (a_tQ(p) - (a_t-1) q_{mon}) - p_{mon}q_{mon}, 
\end{align}
and
\begin{align}
       \overline{\overline{f}}_t(p) := & p   (a_t Q(0) - (a_t-1) q_{mon}) - p_{mon}q_{mon},
\end{align}
where $\overline{\overline{f}}_t(p)$ uses the upper bound on the demand function.
Note that $f_t(p) \leq F_t(p) \leq \overline{\overline{f}}_t(p)$ where $f_t(p)$ is as in Equation~\eqref{revenue-diff}.
Being linear and increasing in $p$, note that $\overline{\overline{f}}_t(p)$ has a root at \begin{equation}
\overline{\overline{p}}_t = \frac{p_{mon}q_{mon}}{a_t Q(0) - (a_t-1) q_{mon}} \overset{t \to \infty}{\to} \frac{p_{mon}q_{mon}}{\frac{Q(0)}{1-\delta} + q_{mon} - \frac{q_{mon}}{1-\delta}} =: p^\star.    
\end{equation}
Hence this $p^\star$ is a new lower bound if $Q(0) \leq s+\delta (q_{mon}-s)$ since then $p^\star\geq p_{ser}$. The proof is completed.
\end{proof}

\section{Tightness of the Bounds}\label{app:tighness}

In this section, we discuss the tightness of our upper and lower bounds (Theorem~\ref{thm1} and Theorem~\ref{prop:Q-eps-fun}). We observe the following:
\begin{itemize}
	\item Calculations in Example~\ref{ex:low-delta-Q-lb} together with Theorem~\ref{thm1} state that the minimum admission price can be upper bounded for $
	\delta \geq \bar{\delta}_{ser} = \delta_{\min}(\overline{p}_{ser}) =   1 -\frac{\epsilon}{2} + \frac{\epsilon^2}{1 + \epsilon}$ and in this case the minimum admission price is at most $
	\TM{p_{ser}}{\delta} = \frac{1}{4} + \frac{1-\delta}{\epsilon} < p_{mon}$.
	\item The proof of Theorem~\ref{prop:Q-eps-fun} shows that, for  $\epsilon < \frac{1-\delta}{2}$, the dynamics get stuck at $p_{mon}$. Equivalently, the minimum admission price is $p_{mon}$ for   $ \delta < \bar{\delta}_{ser}^-:=1 - 2\epsilon$.
\end{itemize}
Putting things together we obtain the following result:
\begin{theorem}\label{thm:tightness}
	For any positive $\epsilon<1/2$ there exist two values $\bar{\delta}_{ser}^-$ and $\bar{\delta}_{ser}$, with  $\bar{\delta}_{ser}^-< \bar{\delta}_{ser}$, such that for  demand function $Q_\epsilon \in \mathcal{Q}$ the following holds. For all $\delta > \bar{\delta}_{ser}$ the minimum admission price is strictly smaller than the monopolist price.  Moreover, for all $\delta < \bar{\delta}_{ser}^-$ , the minimum admission price equals the monopolist price. Finally, the difference between these two thresholds vanishes as $\epsilon \rightarrow 0$, 
	\begin{align}
		\bar{\delta}_{ser} - \bar{\delta}_{ser}^- =  (1 -\frac{\epsilon}{2} + \frac{\epsilon^2}{1 + \epsilon})  - (1 - 2\epsilon)  = \frac{3\epsilon}{2} + \frac{\epsilon^2}{1 + \epsilon} < 2\epsilon\ . 
	\end{align}
\end{theorem}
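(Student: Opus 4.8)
The plan is to assemble the statement from the two one-sided results already established for the class $\mathcal{Q}$, applied to a fixed $Q_\epsilon$ with $0<\epsilon<\tfrac12$, and then compare the two resulting thresholds by elementary algebra. First I would record, as in Example~\ref{ex:low-delta-Q-lb}, that for $Q_\epsilon$ and $s=1$ one has $p_{mon}=q_{mon}=\tfrac12$, $p_{ser}=\tfrac14$, $q_{ser}=\tfrac{1+\epsilon}{2}<1=s$, and $\overline{p}_{ser}=\tfrac{1}{2(1+\epsilon)}$, with $\overline{\delta}_{ser}=\delta_{\min}(\overline{p}_{ser})=1-\tfrac{\epsilon}{2}+\tfrac{\epsilon^2}{1+\epsilon}$. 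Since $q_{ser}<s$, the demand $Q_\epsilon$ satisfies the hypothesis of Lemma~\ref{le:delta-vs-price} (equilibrium revenue strictly below monopolist revenue), so Theorem~\ref{thm1} is applicable. The threshold $\overline{\delta}_{ser}^-$ of the statement is then defined as $\overline{\delta}_{ser}^-:=1-2\epsilon$.

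For the upper regime $\delta>\overline{\delta}_{ser}$ I would invoke Item~\ref{itm:thm1:finite-delta} of Theorem~\ref{thm1}, which bounds the minimum admission price by $\TM{p_{ser}}{\delta}<\overline{p}_{ser}$. It then suffices to observe that $\overline{p}_{ser}=\tfrac{p_{mon}q_{mon}}{q_{ser}}<p_{mon}$, because $q_{mon}<q_{ser}$ (as $p_{mon}>p_{ser}$ and $Q$ is strictly decreasing); concretely $\overline{p}_{ser}=\tfrac{1}{2(1+\epsilon)}<\tfrac12=p_{mon}$ for every $\epsilon>0$. This yields part~(i). For the lower regime I would reuse the computation in the proof of Theorem~\ref{prop:Q-eps-fun}: the function $F_t$ associated with $Q_\epsilon$ is concave with roots $p_{mon}=\tfrac12$ and $p^\star_t=\tfrac{1}{4a_t\epsilon}$, where $a_t=\tfrac{1-\delta^t}{1-\delta}$. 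Because $a_t$ is increasing in $t$ with $a_t<\tfrac{1}{1-\delta}$, the inequality $\delta<\overline{\delta}_{ser}^-=1-2\epsilon$ is equivalent to $\tfrac{1}{1-\delta}<\tfrac{1}{2\epsilon}$ and hence forces $a_t<\tfrac{1}{2\epsilon}$, i.e. $p^\star_t>\tfrac12=p_{mon}$ for \emph{every} $t$. Concavity of $F_t$ with its two roots at $p_{mon}$ and $p^\star_t>p_{mon}$ then gives $F_t(p)<0$ for all $p<p_{mon}$; by Theorem~\ref{th:lb-general} no such $p$ can be selected, and combined with $p_t\leq p_{mon}$ from Item~\ref{itm:thm1:item1} of Theorem~\ref{thm1} this pins $p_t=p_{mon}$ for all $t$, so the minimum admission price equals $p_{mon}$, proving part~(ii).

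Finally I would verify the ordering and the vanishing gap directly: $\overline{\delta}_{ser}-\overline{\delta}_{ser}^-=(1-\tfrac{\epsilon}{2}+\tfrac{\epsilon^2}{1+\epsilon})-(1-2\epsilon)=\tfrac{3\epsilon}{2}+\tfrac{\epsilon^2}{1+\epsilon}>0$, which is the displayed identity and gives $\overline{\delta}_{ser}^-<\overline{\delta}_{ser}$, while $\tfrac{\epsilon^2}{1+\epsilon}<\tfrac{\epsilon}{2}$ (valid for $\epsilon<1$) yields the bound $<2\epsilon$ and hence the gap tends to $0$ as $\epsilon\to0$. I would also check that both thresholds lie in $(0,1)$: $\overline{\delta}_{ser}^-=1-2\epsilon\in(0,1)$ exactly because $0<\epsilon<\tfrac12$, and $\overline{\delta}_{ser}<1$ since $\tfrac{\epsilon}{1+\epsilon}<\tfrac12$. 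The bulk of the mathematical content is carried by the two invoked theorems; the one point that needs genuine care is the lower-threshold half, where the collapse must be shown to hold \emph{uniformly} over all $t$ rather than only in the limit --- this is precisely where monotonicity of $a_t$ is used, so that the worst case $t\to\infty$ produces the clean boundary $1-2\epsilon$. The only other thing to watch is the $s=1$ normalization, which is harmless since rescaling $Q_\epsilon$ by $s$ leaves all prices and thresholds unchanged.
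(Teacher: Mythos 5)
Your proposal is correct and follows essentially the same route as the paper's proof: it assembles the two thresholds from Theorem~\ref{thm1} with the Example~\ref{ex:low-delta-Q-lb} computations (upper regime) and from the concave-$F_t$ root analysis behind Theorem~\ref{prop:Q-eps-fun} via Theorem~\ref{th:lb-general} (collapse regime), then verifies the same gap algebra. If anything, your write-up is slightly more careful in two spots: you get strictness from the clean chain $\TM{p_{ser}}{\delta}<\overline{p}_{ser}=p_{mon}q_{mon}/q_{ser}<p_{mon}$ rather than the explicit formula for $\TM{p_{ser}}{\delta}$, and your observation that $p^\star_t=\frac{1}{4\epsilon a_t}$ \emph{decreases} to $\frac{1-\delta}{4\epsilon}$ (since $a_t$ increases to $\frac{1}{1-\delta}$) is the correct monotonicity direction, whereas the paper's proof of Theorem~\ref{prop:Q-eps-fun} asserts $p^\star_t<p^\star$, which is reversed—the uniform-in-$t$ collapse holds precisely by your argument.
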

\begin{proof}[Proof of Theorem~\ref{thm:tightness-main}] Theorem~\ref{thm:tightness-main} follows from Theorem~\ref{thm:tightness} with $\epsilon= \frac{\xi}{2}$.
 \end{proof}
 
\section{Discussion about Lower Bound and Constant Functions}\label{app:constant-func}
In this section, we discuss the case when the demand function is not strictly decreasing, but rather constant for some interval of prices. In particular, we discuss the case $\epsilon=0$ from Section~\ref{sec: Qepsilon}, and show that the  result for $\epsilon\rightarrow 0$ is in line with Nisan's extension to constant demand functions.

We observe the following:
\begin{itemize}
    \item Taking $\epsilon=0$ in our setting, yields the  \emph{piecewise constant} demand function $Q_0 \in \mathcal{Q}$ shown in Figure~\ref{fig:piecewise_eps=0}. For any $s\geq \frac{1}{2}$, the price dynamics stays at $p_{mon}=\frac{1}{2}$ which can be verified by observing that for any $p<p_{mon}$ and any $t\geq 1$ we have $D_t(p) = \frac{1}{2}$. Therefore, to maximize revenue, the monopolist will always choose $p_t=p_{mon}$, and thus the minimum admission price equals $p_{mon}=\frac{1}{2}$. Note that  this happens for all $\delta$, including $\delta=1$. 
    \item For patient users,  Nisan \cite{Nisan} observed that for weakly decreasing (piecewise constant) demand functions, the lower bound $p_{ser}$  is no longer tight and needs to be adjusted to some $\bar{p}_{ser}$ defined as follows. 
    This new lower bound is the largest $p$ such that $$Q_0(p_{ser}) = Q_0(p_{mon}q_{mon}/s).$$ In this example, $p_{mon}q_{mon}=\frac{1}{4}$ and   $Q_0(p_{mon}q_{mon}/s)=\frac{1}{2}$ for $s\geq \frac{1}{2}$. Clearly, the largest $p$ such that $Q_0(p)=\frac{1}{2}$ is $p=\frac{1}{2}(=:\bar{p}_{ser})$ which is the new lower bound. That is, also for $\delta=1$ the minimum admission price equals  $p_{mon}=\frac{1}{2}$.
\end{itemize}

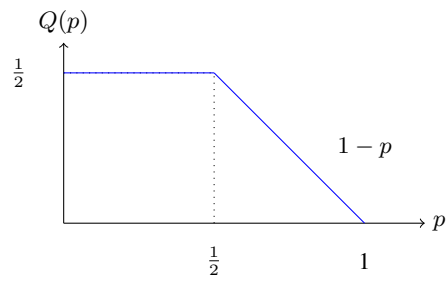
\begin{figure}[h]
    \centering
    \begin{tikzpicture}[scale=4]
  \draw[->] (0, 0) -- (1.2, 0) node[right] {$p$};
  \draw[->] (0, 0) -- (0, 0.6) node[above] {$Q(p)$};
  \draw[dotted] (0.5, 0) -- (0.5, 0.5);
  \draw[dotted] (0, 0.5) -- (0.5, 0.5);
  \node[xshift=-0.6cm] at (0,0.5) {$\frac{1}{2}$};
  \node[yshift=-0.5cm] at (0.5,0) {$\frac{1}{2}$};
  \node[yshift=-0.5cm] at (1,0) {1};
  \node[xshift=1cm] at (0.75,0.25) {$1-p$};
  \draw[domain=0:0.5, smooth, variable=\x, blue] plot ({\x}, {0.5});
  \draw[domain=0.5:1, smooth, variable=\x, blue] plot ({\x}, {1-\x});
\end{tikzpicture}
    \caption{Daily demand function of class $\mathcal{Q}_\epsilon$ for $\epsilon=0$.}
    \label{fig:piecewise_eps=0}
\end{figure}

\end{document}